\documentclass[a4paper,reqno]{amsart}
\usepackage{a4wide,color}
\usepackage{amsmath,amssymb,amsthm,mathrsfs}
\usepackage[american]{babel}
\usepackage[pdftex]{graphicx}

\newenvironment{myenumerate}{
	
	\begin{enumerate}
		\setlength{\itemsep}{0em}\setlength{\parsep}{0em}%
		\setlength{\topsep}{0em}\setlength{\parskip}{0em}%
	}
{ 	\end{enumerate} }

\numberwithin{equation}{section}

\def\R{\mathbb{R}}
\def\N{\mathbb{N}}
\def\B{\mathcal{B}}
\def\Lebesgue{\mathcal{L}}

\def\I{\mathcal{I}}
\def\apppf{\sharp}
\def\lint{\displaystyle\int\limits}

\DeclareMathOperator{\supp}{supp}

\DeclareMathOperator{\Int}{Int}
\DeclareMathOperator{\tr}{tr}

\newtheorem{assumption}{Assumption}
\newtheorem{theorem}{Theorem}
\newtheorem{proposition}[theorem]{Proposition}
\newtheorem{lemma}[theorem]{Lemma}
\theoremstyle{remark}\newtheorem*{remark}{Remark}

\title{Time-evolving measures and macroscopic modeling of pedestrian flow}

\author{Benedetto Piccoli}
\address{Istituto per le Applicazioni del Calcolo ``Mauro Picone'' \\
		Consiglio Nazionale delle Ricerche \\
		Viale del Policlinico 137, 00161 Roma, Italy}
\email{b.piccoli@iac.cnr.it}

\author{Andrea Tosin}
\address{Istituto per le Applicazioni del Calcolo ``Mauro Picone'' \\
		Consiglio Nazionale delle Ricerche \\
		Viale del Policlinico 137, 00161 Roma, Italy}
\email{a.tosin@iac.cnr.it}

\subjclass[2000]{Primary: 35L65; Secondary: 90B20, 91D10}
\keywords{Pedestrian flow, macroscopic modeling, measure theory, push forward.}

\begin{document}

\begin{abstract}
This paper deals with the early results of a new model of pedestrian flow, conceived within a measure-theoretical framework. The modeling approach consists in a discrete-time Eulerian macroscopic representation of the system via a family of measures which, pushed forward by some motion mappings, provide an estimate of the space occupancy by pedestrians at successive time steps. From the modeling point of view, this setting is particularly suitable to treat nonlocal interactions among pedestrians, obstacles, and wall boundary conditions. In addition, analysis and numerical approximation of the resulting mathematical structures, which is the main target of this work, follow more easily and straightforwardly than in case of standard hyperbolic conservation laws, also used in the specialized literature by some Authors to address analogous problems.
\end{abstract}

\maketitle

\section{Introduction}
In the last decades, the modeling of crowd motion and pedestrian flow has drawn the attention of applied mathematicians, because of an increasing number of applications, in engineering and social sciences, dealing with this or similar complex systems. First studies on pedestrian behavior, dating back to the sixties, were aimed at providing guidelines for the design of walkway infrastructures in urban areas, and yet nowadays crowd-structure interaction is the object of scientific investigations (see e.g., Venuti \textit{et al.} \cite{MR2284944} and the main references therein). More recently, management and optimization of pedestrian fluxes in large and crowded environments, like airports, stations, shopping centers, stadiums, have motivated new efforts toward more accurate experimental investigations (see e.g., Helbing \emph{et al.} \cite{HeJoAA,HeMoFaBo}) and a more targeted mathematical modeling of human walking attitudes in bounded areas, with special emphasis on the concepts of ``walking program'' of pedestrians and of pedestrian-pedestrian, pedestrian-obstacle interactions.

The existing literature on mathematical modeling of human crowds is especially concerned with models at the microscopic scale, in which pedestrians are described individually in their motion by a set of ordinary differential equations. Problems are usually set in two-dimensional domains delimiting the walking area under consideration, and the presence of obstacles within the domain, as well as of targets that pedestrians aim at, is possibly taken into account. The basic modeling framework relies on classical Newtonian laws of point mechanics: Pedestrians are assimilated to rigid disks of fixed, possibly different, radii, with a velocity, or alternatively an acceleration, that takes into account both the desired direction of motion and a superimposed perturbation due to the presence of other pedestrians in the surroundings
(Helbing and Johansson \cite{HeJo}, Maury and Venel \cite{MaVe}). In practice, pedestrians are assumed to move mainly according to a \emph{desired velocity}, i.e., the velocity they would have in the absence of other people, so as to get
a desired target and avoid at the same time intermediate obstacles. The desired velocity is however modified by the necessity to fulfill some local geometrical constraints of maximal congestion. More complicated strategies to determine the direction of motion may involve the minimization of some `walking cost', that each pedestrian evaluates on the basis of the current configuration of the system in her/his neighborhood and of her/his prediction about the motion of other pedestrians (Hoogendoorn and Bovy \cite{HoBo2}).

A different approach to the problem, however not yet as much developed in the specialized literature, consists in using partial differential equations at the macroscopic scale, that is in describing the evolution in time and space of pedestrian density rather than following each subject individually. The starting point is the analogy with classical fluid dynamics: An Eulerian point of view is adopted, appealing to the conservation of mass of pedestrians supplemented by either suitable closure relations linking the velocity of the latter to their density (see e.g., Hughes \cite{Hu}) or an analogous balance law for the momentum (see e.g., Bellomo and Dogb\'e \cite{MR2438218}). Again, typical guidelines in
devising this kind of models are the concepts of preferred direction of motion and discomfort at high densities. In the framework of scalar conservation laws, a macroscopic one-dimensional model has been proposed by Colombo and Rosini
\cite{MR2158218}, resorting to some common ideas to vehicular traffic modeling (e.g., fundamental diagrams), with the specific aim of describing the transition from normal to panic conditions.

Microscopic models allow an accurate description of the behavior of each single agent of the system, but they usually require a large number of ordinary differential equations to be dealt with simultaneously, as many as the number of pedestrians considered in the model. This often makes it difficult, for practical purposes, to recover macroscopic information on the system from the knowledge of the microscopic state, hence to investigate the relevant global features possibly also in connection with control and optimization problems. On the other hand, macroscopic models existing in the literature heavily rely on the fluid dynamics analogy, hence they use mathematical tools proper of hyperbolic conservation laws, which however might not be the most appropriate ones to address the behavior of pedestrians. For instance, hyperbolic equations require the imposition of boundary conditions in a technically tricky way, which may not correspond to the actual modeling needs. Moreover, crowd motion is definitely not one-dimensional: A two-dimensional description is inevitable in order to catch the fundamental aspects of the system, but it is known that multidimensional hyperbolic equations generate additional analytical and numerical difficulties, leading to an increase of technical complexity in handling the final product.

Bearing all above in mind, we propose to adopt a different macroscopic point of view, based on a measure-theoretical framework which has recently been introduced by Canuto \emph{et al.} \cite{CaFaTi} for coordination problems
(\emph{rendez-vous}) of multiagent systems.

Given a two-dimensional spatial domain $\Omega\subset\R^2$, possibly containing obstacles understood as internal boundaries, the basic idea is to describe the space occupancy by pedestrians at time $n$ via a measure $\mu_n$, that is, roughly speaking, a mapping that to each subset $E$ of $\Omega$ associates a real nonnegative number $\mu_n(E)$ representing an estimate, in macroscopic averaged terms, of the amount of people contained in $E$. The whole mass of
pedestrians is then obtained as $\mu_n(\Omega)$, and its conservation in time is possibly achieved by requiring $\mu_n(\Omega)=\mu_0(\Omega)$ for all $n>0$. If $v_n=v_n(x)$ is the velocity field at time $n$ of pedestrians located at a point $x\in\Omega$, the dynamics of the system is described by a family of mappings $\{\gamma_n\}_{n\geq 0}$ such that $\gamma_n(x)-x=v_n(x)\Delta{t}$, which entails the following evolution for the measure $\mu_n$ (\emph{push forward}):
$$ \mu_{n+1}(E)=\mu_n(\gamma_n^{-1}(E)), \qquad \forall\,E\subseteq\Omega. $$
This is nothing but a formal mathematical statement of the simple idea that the amount of people contained in
a spatial region $E$ at time $n>0$ is related to the analogous amount at the initial time $n=0$ along the trajectories of the motion of pedestrians themselves. In this context, we recall that a measure theoretical approach for mass transportation problems has been proposed also by Buttazzo \emph{et al.} \cite{BuJiOu}, focusing in particular on concentration and congestion effects. Specifically, the Authors claim that the latter may be applied to the modeling of crowd dynamics in panic situations.

The construction of the velocity $v_n$ is the main modeling task. The framework we use easily allows to duly incorporate the concept of desired velocity, as well as to model interactions among pedestrians, possibly accounting for averaged non-local effects due to the surrounding crowding. By naturally tracing, for fixed $x\in\Omega$, the displacements of pedestrians, the mapping $\gamma_n(x)$ describes the system under an essentially Lagrangian point of view, which is, in a sense, a more natural way to look at the motion of the agents. On the other hand, the push forward of the measure $\mu_n$ finally refers to an Eulerian handling of the system, and takes thus advantage of a fixed geometry without the need for resorting to the concept of evolving reference configuration.

The paper is organized into three more sections that follow this Introduction. In Sect. \ref{sect:theory} the modeling framework by time-evolving measures is introduced for generic $d$-dimensional continuous systems. After a first parallelism with the classical mass conservation equation of continuum mechanics, the theory focuses on discrete time push forward, which is shown to represent an explicit time discretization of a conservation law for a family of continuous-time-evolving measures. Well posedness and spatial approximation of the discrete time model are addressed. In addition, a computational scheme for its numerical treatment is derived, with a related convergence and error analysis. In Sect. \ref{sect:modeling} the above theoretical framework is specifically applied to pedestrian flow. From the modeling point of view, this setting proves to be useful in extending the idea of macroscopic mass to cases, like in fact human crowds, in which mass and mass density may not be conceptually defined in the straightforward sense of continuum mechanics. It may be questioned that this problem is actually common to many other classical and nonclassical systems, ranging from gas dynamics to vehicular traffic, for which, despite the essential granular nature of the matter under consideration, the continuum approach has been successfully used. Nevertheless, in the case of pedestrian flow the measure theoretical approach allows to deal with nonlocal interactions among pedestrians, obstacles and more generally wall boundary conditions, and numerical approximations more effectively than standard hyperbolic conservation laws. Finally, Sect. \ref{sect:conclusions} draws some conclusions and briefly sketches research perspectives.

\section{Time-evolving measures theoretical framework}	\label{sect:theory}
Many real world systems can be mathematically described at the macroscopic scale, usually invoking conservation or balance laws of some gross quantities. As a matter of fact, in modern applications dealing with life and social sciences, e.g., biological systems, vehicular traffic, pedestrian flow, often the only conservation principle which is reasonable to resort to is the conservation of mass, because the evolution of the system involves complex behavioral aspects which need not preserve other classical mechanical quantities, like linear momentum or energy.

In the physical space $\R^d$ ($d=1,\,2,\,3$ for applications), the \emph{mass} can be viewed as a \emph{Radon positive measure} $\mu:\B(\R^d)\to\R_+$, i.e., a mapping from the Borel $\sigma$-algebra $\B(\R^d)$ to the real nonnegative line, whose countable additivity furnishes the mathematical counterpart of the principle of additivity of the mass. The Lebesgue measure $\Lebesgue^d$ on $\R^d$ formalizes instead the concept of \emph{volume}. If one assumes the \emph{continuum hypothesis}, stating that the mass $\mu$ is absolutely continuous with respect to $\Lebesgue^d$ (written $\mu\ll\Lebesgue^d$), namely that every body with zero volume has also zero mass, then Radon-Nikodym theorem implies the existence of a function $\rho\in L^1_{loc}(\R^d)$, $\rho\geq 0$ a.e. in $\R^d$, called the \emph{mass density}, such that $d\mu=\rho\,dx$.

In continuum mechanics, for a mass density $\rho_t$ evolving in time\footnote{Throughout this section, the subscript $t$ does not denote the partial derivative with respect to time. We use the notation $\rho_t(x)$, instead of the more classical $\rho(t,\,x)$ (and similarly for other functions), to emphasize the idea that the density should be regarded as a function of $x\in\R^d$ parameterized by $t>0$.} the principle of conservation of the mass is stated in Lagrangian form as
\begin{equation}
	\frac{d}{dt}\lint_{\Gamma_t(E)}\rho_t(x)\,dx=0,\qquad
		\forall\,E\in\B(\R^d),
	\label{eq:mass-Lag}
\end{equation}
where $\Gamma_t$ is the \emph{motion mapping}, which describes the motion of the points of $\R^d$ under the action of a certain velocity field $V_t$ as
$$ \dot{\Gamma}_t(x)=V_t(\Gamma_t(x)), $$
supplemented by the further condition $\Gamma_0(x)=x$. In practice, $\Gamma_t(x)$ is the position occupied at time $t$ by the point initially located at $x$. Using Reynolds theorem, Eq. \eqref{eq:mass-Lag} can be formally converted in Eulerian form and rewritten as the well-known conservation law for the density $\rho_t$:
\begin{equation}
	\frac{\partial\rho_t}{\partial t}+\nabla\cdot(\rho_tV_t)=0.
	\label{eq:mass-Eul}
\end{equation}
Equivalence between Eqs. \eqref{eq:mass-Lag} and \eqref{eq:mass-Eul}, however, holds for smooth $\rho_t$ and $V_t$ only, as passing from Eq. \eqref{eq:mass-Lag} to Eq. \eqref{eq:mass-Eul} requires some manipulations which are not valid if
the involved functions are not smooth.

As a natural generalization of Eq. \eqref{eq:mass-Eul} to the case in which a mass density need not be present, one can assume that a family of time-evolving measures $\mu_t:\B(\R^d)\to\R_+$, $t>0$, is given, satisfying
\begin{equation}
	\frac{\partial\mu_t}{\partial t}+\nabla\cdot(\mu_t V_t)=0.
	\label{eq:mass-meas}
\end{equation}
This partial differential equation has to be meant in the sense of measures as follows: For every infinitely differentiable test function $\eta$ with compact support in $\R^d$, i.e., $\eta\in C^\infty_0(\R^d)$,
\begin{equation}
	\frac{d}{dt}\lint_{\R^d}\eta(x)\,d\mu_t(x)=\lint_{\R^d}\nabla{\eta}(x)\cdot V_t(x)\,d\mu_t(x).
	\label{eq:mass-meas-weak}
\end{equation}
In particular, a family of measures $\{\mu_t\}_{t>0}$ is said to be a solution to Eq. \eqref{eq:mass-meas} if for all $\eta\in C^\infty_0(\R^d)$ the mapping
$$ t\mapsto\lint_{\R^d}\eta(x)\,d\mu_t(x) $$
is absolutely continuous and satisfies Eq. \eqref{eq:mass-meas-weak}. Notice that a basic requirement for the right-hand side of Eq. \eqref{eq:mass-meas-weak} to be well defined is $V_t\in(L^1(\R^d,\,\mu_t))^d$ for all $t>0$, i.e.,
$$ \lint_{\R^d}\vert V_t(x)\vert\,d\mu_t(x)<+\infty, \qquad \forall\,t>0. $$

Let us denote by $\supp{\mu_t}$ the \emph{support} of the measure $\mu_t$, and let us assume that an open set $\Omega\subset\R^d$ exists such that $\supp{\mu_t}\subset\subset\Omega$ for all $t>0$. We can then rewrite Eq. \eqref{eq:mass-meas-weak} by integrating on $\Omega$ only:
\begin{equation}
	\frac{d}{dt}\lint_{\Omega}\eta(x)\,d\mu_t(x)=\lint_{\Omega}\nabla{\eta}(x)\cdot V_t(x)\,d\mu_t(x),
		\qquad \forall\,\eta\in C^\infty_0(\Omega).
	\label{eq:mass-meas-weak-2}
\end{equation}
In addition, there exist an open set $U\subset\Omega$ and a test function $\eta_U\in C^\infty_0(\Omega)$ such that $\supp{\mu_t}\subset U$ for each $t>0$, $\eta_U\equiv 1$ on $\supp{\mu_t}$, and $\eta_U\equiv 0$ on $\Omega\setminus U$, whence Eq. \eqref{eq:mass-meas-weak-2} with the choice $\eta=\eta_U$ gives
$\frac{d}{dt}\mu_t(\supp{\mu_t})=0$. However, $\mu_t(\supp{\mu_t})=\mu_t(\Omega)$ for all $t>0$, thus finally
\begin{equation}
	\frac{d}{dt}\mu_t(\Omega)=0,
	\label{eq:consmass_cont}
\end{equation}
which states that no matter is flowing through the boundary $\partial\Omega$ at any time: The measure of $\Omega$ is not varying, hence the whole mass remains concentrated within it. Condition $\supp{\mu_t}\subset\subset\Omega$ for all $t>0$
compares therefore with a classical homogeneous Neumann boundary condition for the PDE \eqref{eq:mass-Eul}.

\subsection{Push forward and time discretization}	\label{subsect:pushfwd_timedisc}
When dealing with a discrete time evolution of the system, one can fix a time step $\Delta{t}>0$ and trace the mass by a sequence of positive measures $\{\mu_n\}_{n>0}$, $\mu_n:\B(\R^d)\to\R_+$ each $n$, via the following recurrence relation (\emph{push forward}):
\begin{equation}
	\mu_{n+1}=\gamma_n\#\mu_n,
	\label{eq:pushfwd}
\end{equation}
where $\gamma_n:\R^d\to\R^d$ is the \emph{one-step motion mapping} (briefly termed \emph{motion mapping} in the sequel for simplicity). More specifically,
\begin{equation}
	\gamma_n(x)=x+v_n(x)\Delta{t},
	\label{eq:gamman}
\end{equation}
with $v_n:\R^d\to\R^d$ a velocity field, so that $\gamma_n(x)$ is the position at the $(n+1)$-th time step of the point which at the $n$-th time step is located in $x$.

The push forward \eqref{eq:pushfwd} has to be understood formally as
\begin{equation}
	\mu_{n+1}(E)=\mu_n(\gamma_n^{-1}(E)), \qquad \forall\,E\in\B(\R^d),
	\label{eq:pushfwd_2}
\end{equation}
which shows that $\mu_{n+1}$ is unaffected by the values that $\gamma_n$ possibly takes outside $\supp{\mu_n}$. On the other hand, $\supp{\mu_{n+1}}=\gamma_n(\supp{\mu_n})$, hence if there exists a set $\Omega\subseteq\R^d$ such that $\gamma_n(\Omega)\subseteq\Omega$ for all $n$, and furthermore an initial measure $\mu_0:\B(\R^d)\to\R_+$ is prescribed with $\supp{\mu_0}\subseteq\Omega$, then $\supp{\mu_n}\subseteq\Omega$ for all $n>0$, whence we get, analogously to Eq. \eqref{eq:consmass_cont}, the conservation of the mass of $\Omega$:
$$ \mu_{n}(\Omega)=\mu_0(\Omega), \qquad \forall\,n>0. $$

The definition of push forward given by Eq. \eqref{eq:pushfwd_2} is equivalent to
\begin{equation}
	\lint_{\R^d}\eta(x)\,d\mu_{n+1}(x)=\lint_{\R^d}\eta(\gamma_n(x))\,d\mu_n(x)
	\label{eq:pushfwd_3}
\end{equation}
for every bounded and Borel function $\eta:\R^d\to\R$. In particular, if $\eta\in C^\infty_0(\R^d)$ we can expand
$$ \eta(\gamma_n(x))=\eta(x+v_n(x)\Delta{t})=\eta(x)+\Delta{t}\nabla{\eta}(x)\cdot v_n(x)+O(\Delta{t}^2) $$
and then plug this expression into Eq. \eqref{eq:pushfwd_3} to get
$$ \frac{1}{\Delta{t}}\left[\lint_{\R^d}\eta(x)\,d\mu_{n+1}-\lint_{\R^d}\eta(x)\,d\mu_n\right]=
	\lint_{\R^d}\nabla{\eta}(x)\cdot v_n(x)\,d\mu_n+O(\Delta{t}), $$
which, at least for $\mu_n$-uniformly bounded $v_n$, can be viewed as an explicit time discretization of Eq. \eqref{eq:mass-meas-weak} up to identifying $v_n(x)=V_{n\Delta{t}}(x)$. Again, if $\supp{\mu_n}\subseteq\Omega$ for all
$n>0$ then Eq. \eqref{eq:pushfwd_3} can be rewritten by integrating on $\Omega$ only:
\begin{equation}
	\lint_\Omega\eta(x)\,d\mu_{n+1}(x)=\lint_\Omega\eta(\gamma_n(x))\,d\mu_n(x),
	\label{eq:pushfwd_4}
\end{equation}
and the same Taylor expansion performed above provides now an explicit time discretization of Eq. \eqref{eq:mass-meas-weak-2} for test functions $\eta\in C^\infty_0(\Omega)$.

\begin{proposition}	\label{prop:mass-Lag-mu}
The push forward \eqref{eq:pushfwd} of the measure $\mu_n$ is the direct time discretization of the conservation law
\begin{equation}
	\frac{d}{dt}\mu_t(\Gamma_t(E))=0, \qquad \forall\,E\in\B(\R^d).
	\label{eq:mass-Lag-mu}
\end{equation}
\end{proposition}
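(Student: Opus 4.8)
The plan is to show that the Lagrangian conservation law \eqref{eq:mass-Lag-mu} forces $\mu_t$ to be, at every instant, the push forward of the initial measure $\mu_0$ along the flow of the velocity field, and then to read the recurrence \eqref{eq:pushfwd} as the forward Euler step for that flow. First I would integrate \eqref{eq:mass-Lag-mu} in time and use $\Gamma_0=\mathrm{id}$ to obtain $\mu_t(\Gamma_t(E))=\mu_0(E)$ for every $E\in\B(\R^d)$. Assuming $V_t$ regular enough in space (e.g.\ Lipschitz) that the Cauchy problem $\dot\Gamma_t=V_t(\Gamma_t)$, $\Gamma_0=\mathrm{id}$, defines for each $t$ a homeomorphism $\Gamma_t$ of $\R^d$, every Borel set $F$ can be written $F=\Gamma_t(E)$ with $E=\Gamma_t^{-1}(F)\in\B(\R^d)$, whence $\mu_t(F)=\mu_0(\Gamma_t^{-1}(F))=(\Gamma_t\#\mu_0)(F)$; that is, $\mu_t=\Gamma_t\#\mu_0$.

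Next I would exploit the flow (semigroup) property. Writing $\Gamma_{s,t}$ for the flow of $V$ from time $s$ to time $t$, namely $\frac{d}{dt}\Gamma_{s,t}(x)=V_t(\Gamma_{s,t}(x))$ with $\Gamma_{s,s}=\mathrm{id}$, so that $\Gamma_t=\Gamma_{0,t}$, uniqueness of solutions gives $\Gamma_{0,t}=\Gamma_{s,t}\circ\Gamma_{0,s}$; combined with the functoriality of push forward, $(g\circ f)\#\mu=g\#(f\#\mu)$, and evaluated at the grid times $t_n=n\Delta{t}$ this yields
\[ \mu_{t_{n+1}}=\Gamma_{0,t_{n+1}}\#\mu_0=\Gamma_{t_n,t_{n+1}}\#\bigl(\Gamma_{0,t_n}\#\mu_0\bigr)=\Gamma_{t_n,t_{n+1}}\#\mu_{t_n}. \]
Thus the exact continuous evolution, sampled on the grid, is already a push forward, with one-step motion mapping $\Gamma_{t_n,t_{n+1}}$.

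It then remains to Euler-discretize that motion mapping. From $\Gamma_{t_n,t_{n+1}}(x)=x+\int_{t_n}^{t_{n+1}}V_s(\Gamma_{t_n,s}(x))\,ds=x+\Delta{t}\,V_{t_n}(x)+O(\Delta{t}^2)$, uniformly for $x$ in the compact set $\supp\mu_{t_n}\subset\subset\Omega$ (using boundedness of $V$ and of its derivatives there), the leading term is exactly $\gamma_n(x)=x+v_n(x)\Delta{t}$ of \eqref{eq:gamman} with $v_n=V_{t_n}$; replacing $\Gamma_{t_n,t_{n+1}}$ by $\gamma_n$ in the identity above produces precisely $\mu_{n+1}=\gamma_n\#\mu_n$ of \eqref{eq:pushfwd}, with $\mu_n$ now denoting the approximation of $\mu_{t_n}$. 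That this is a consistent, first-order discretization follows from a one-step error estimate in the weak sense: for bounded Lipschitz $\eta$,
\[ \left\vert\lint_{\R^d}\eta\,d\bigl(\Gamma_{t_n,t_{n+1}}\#\mu_{t_n}\bigr)-\lint_{\R^d}\eta\,d\bigl(\gamma_n\#\mu_{t_n}\bigr)\right\vert\leq\mathrm{Lip}(\eta)\,\|\Gamma_{t_n,t_{n+1}}-\gamma_n\|_{\infty,\,\supp\mu_{t_n}}\,\mu_{t_n}(\R^d)=O(\Delta{t}^2), \]
so the local truncation error per step is $O(\Delta{t}^2)$, which is the precise sense in which \eqref{eq:pushfwd} is \emph{the} direct time discretization of \eqref{eq:mass-Lag-mu}.

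The one genuinely delicate point is the first step: turning the set-indexed differential identity \eqref{eq:mass-Lag-mu} into the global relation $\mu_t=\Gamma_t\#\mu_0$ requires $\Gamma_t$ to be invertible with Borel-measurable inverse, and the $t$-derivative in \eqref{eq:mass-Lag-mu} to hold with enough uniformity in $E$ to be integrated in time — exactly the type of regularity caveat already flagged for the passage from \eqref{eq:mass-Lag} to \eqref{eq:mass-Eul}. Under the standing assumptions (Lipschitz, bounded $V_t$; compactly supported measures with support compactly contained in $\Omega$) these are harmless, and the remaining ingredients — the semigroup property of the flow, the functoriality of push forward, and the $O(\Delta{t}^2)$ Taylor remainder — are elementary.
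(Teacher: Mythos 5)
Your argument is correct (modulo the regularity caveats you yourself flag), but it takes a genuinely different route from the paper's. The paper proceeds by a purely formal discretize-then-relabel argument: it replaces $\Gamma_t$ by the composition $\Gamma_{n+1}=\gamma_n\circ\dots\circ\gamma_0$, writes the forward difference quotient $\bigl(\mu_{n+1}(\Gamma_{n+1}(E))-\mu_n(\Gamma_n(E))\bigr)/\Delta{t}=0$ for the conserved quantity in \eqref{eq:mass-Lag-mu}, and then sets $\tilde{E}:=\Gamma_{n+1}(E)$, so that $\Gamma_n(E)=\gamma_n^{-1}(\tilde{E})$ and the push forward \eqref{eq:pushfwd_2} drops out formally — no solving of the continuous problem, and essentially no regularity of $V_t$ or $\mu_t$ is invoked, which is exactly the point stressed in the remark following the proposition. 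You instead solve first and discretize afterwards: integrating \eqref{eq:mass-Lag-mu} to get $\mu_t=\Gamma_t\#\mu_0$, using the flow property and functoriality of $\#$ to identify the exact one-step evolution $\mu_{t_{n+1}}=\Gamma_{t_n,t_{n+1}}\#\mu_{t_n}$, and then replacing the exact flow map by its Euler approximation $\gamma_n$, with a weak-sense $O(\Delta{t}^2)$ one-step consistency bound. What your route buys is more quantitative content — the exact representation of the continuous solution and a genuine local truncation error estimate justifying the word ``discretization'' — but at the cost of the stronger hypotheses you list (Lipschitz, bounded $V_t$; invertibility of $\Gamma_t$; uniformity in $E$ when integrating the derivative), precisely the kind of regularity the paper's formal manipulation is designed to avoid. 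Note also that the paper's relabeling step $\Gamma_n(E)=\gamma_n^{-1}(\tilde{E})$ tacitly uses invertibility of $\gamma_n$ just as your step uses invertibility of $\Gamma_t$, so on that point the two arguments are on equal (formal) footing.
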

\begin{proof}
To see this, discretize the motion mapping $\Gamma_t$ by the family of one-step motion mappings $\{\gamma_n\}_{n>0}$ as $\Gamma_{n+1}=\gamma_n\circ\dots\circ\gamma_0$, then approximate the time derivative in Eq. \eqref{eq:mass-Lag-mu} above as
$$ \frac{\mu_{n+1}(\Gamma_{n+1}(E))-\mu_n(\Gamma_n(E))}{\Delta{t}}=0. $$
Setting $\tilde{E}:=\Gamma_{n+1}(E)$, and consequently $\Gamma_n(E)=\gamma_n^{-1}(\tilde{E})$, yields
$$ \mu_{n+1}(\tilde{E})=\mu_n(\gamma_n^{-1}(\tilde{E})), $$
whence we recover formally the relation $\mu_{n+1}=\gamma_n\#\mu_n$ (cf. also Eq. \eqref{eq:pushfwd_2}) or, equivalently, Eqs. \eqref{eq:pushfwd_3}, \eqref{eq:pushfwd_4}.
\end{proof}

\begin{remark}
The Lagrangian mass conservation law \eqref{eq:mass-Lag} is a particular case of Eq. \eqref{eq:mass-Lag-mu} for the measure $d\mu_t=\rho_t\,dx$. It is worth noticing that, unlike classical procedures to derive the pointwise Eulerian mass conservation equation \eqref{eq:mass-Eul}, or analogously one of its weak forms \eqref{eq:mass-meas-weak}, \eqref{eq:mass-meas-weak-2}, from Eq. \eqref{eq:mass-Lag}, the time discretization allows very few regularity of the involved fields in order to attain to Eqs. \eqref{eq:pushfwd_3}, \eqref{eq:pushfwd_4} from Eq. \eqref{eq:mass-Lag-mu}.
\end{remark}

In order to deal with problems posed in a fixed bounded domain $\Omega\subset\R^d$, from now on we concentrate on motion mappings $\gamma_n:\Omega\to\Omega$ such that $\gamma_n(\Omega)\subseteq\Omega$. This way, given an initial measure $\mu_0$ supported in $\Omega$, all measures $\mu_n$ deduced from the recurrence relation \eqref{eq:pushfwd} are supported in $\Omega$ as well, hence we can simply think of them as defined on the Borel $\sigma$-algebra $\B(\Omega)$ and refer directly to Eq. \eqref{eq:pushfwd_4} whenever necessary. Notice that the measure of $\Omega$ is conserved in this case, indeed $\mu_{n+1}(\Omega)=\mu_n(\gamma_n^{-1}(\Omega))=\mu_n(\Omega)$ and the claim easily follows by induction. In particular, if $\mu_0(\Omega)<+\infty$ then $\Omega$ will have a finite measure for all successive times $n>0$.

We begin by stating some basic properties of the motion mappings.
\begin{assumption}	\label{ass:gamma}
For each $n\in\N$, we assume that the motion mapping $\gamma_n$
\begin{enumerate}
\item is Borel, i.e., $\gamma_n^{-1}(E)\in\B(\Omega)$ for all $E\in\B(\Omega)$;
\item \label{ass:gamma-Cn}
satisfies
\begin{equation}
	\Lebesgue^d(\gamma_n^{-1}(E))\leq C\Lebesgue^d(E),\qquad\forall\,E\in\B(\Omega),
	\label{eq:ass-gamma-Cn}
\end{equation}
for a certain constant $C>0$.
\end{enumerate}
\end{assumption}
Let us briefly comment on property \eqref{ass:gamma-Cn} of Assumption \ref{ass:gamma}. Given a Borel set $E\subseteq\Omega$, Eq. \eqref{eq:ass-gamma-Cn} allows to control the Lebesgue measure of the inverse image of $E$ by the Lebesgue measure of $E$ itself. From a different point of view, it requires that $\gamma_n$ does not map Lebesgue non-negligible subsets of $\Omega$ into Lebesgue negligible sets. Indeed, if $A\subseteq\Omega$ is such that $\Lebesgue^d(A)>0$, it is impossible to have $\Lebesgue^d(\gamma_n(A))=0$, for Eq. \eqref{eq:ass-gamma-Cn} implies $\Lebesgue^d(\gamma_n(A))\geq C^{-1}\Lebesgue^d(A)>0$ (take formally $E=\gamma_n(A)$).

Property \eqref{ass:gamma-Cn} is for instance satisfied if the following holds
true:
\begin{equation}
	\vert x-y\vert\leq\Lambda\vert\gamma_n(x)-\gamma_n(y)\vert,\qquad\forall\,x,\,y\in\Omega
	\label{eq:gamma_invLip}
\end{equation}
for a certain constant $\Lambda\geq 0$. In this case, a suitable constant is $C=\Lambda^d$, indeed Eq. \eqref{eq:gamma_invLip} entails on the one hand $\Lebesgue^d(\gamma_n^{-1}(E))\leq\Lambda^d\Lebesgue^d(\gamma_n(\gamma_n^{-1}(E)))$, while on the other hand it suffices to observe that $\gamma_n(\gamma_n^{-1}(E))\subseteq E$, thus $\Lebesgue^d(\gamma_n(\gamma_n^{-1}(E)))\leq\Lebesgue^d(E)$. We claim that the relation \eqref{eq:gamma_invLip} holds if, for instance, the velocity $v_n$ is Lipschitz continuous on $\Omega$ with Lipschitz constant $0\leq L<\Delta{t}^{-1}$:
$$ \vert v_n(x)-v_n(y)\vert\leq L\vert x-y\vert, \qquad \forall\,x,\,y\in \Omega. $$
In fact, in such a case we have
\begin{align*}
	\vert\gamma_n(x)-\gamma_n(y)\vert &= \vert(x-y)+\Delta{t}(v_n(x)-v_n(y))\vert \\
	&\geq \vert x-y\vert-\Delta{t}\vert v_n(x)-v_n(y)\vert \\
	&\geq (1-L\Delta{t})\vert x-y\vert,
\end{align*}
and if $L$ fulfills the previous constraints we can take $\Lambda={(1-L\Delta{t})}^{-1}\geq 1$.

Another possible motion mapping complying with property \eqref{ass:gamma-Cn} is
$$ \gamma_n(x)=x+\Delta{t}\sum_{i=1}^M a_i^n\chi_{E_i}(x), $$
where $\{E_i\}_{i=1}^M$ is a pairwise disjoint partition of $\Omega$, i.e., $\Int{E_i}\cap\Int{E_j}=\emptyset$ each $i\ne j$ and $\cup_{i=1}^M E_i=\Omega$, $\chi_{E_i}$ is the characteristic function of $E_i$, and $a_i^n\in\R^d$ are constant, so that the velocity $v_n$ turns out to be the piecewise constant function $v_n(x)=\sum_{i=1}^M a_i^n\chi_{E_i}(x)$. Such a $\gamma_n$ is a measurable piecewise translation on $\Omega$ with the following property: For any $E\in\B(\Omega)$, it results
$$ \Lebesgue^d(\gamma_n^{-1}(E))=\sum_{i=1}^M\Lebesgue^d(\gamma_n^{-1}(E)\cap E_i)=
	\sum_{i=1}^M\Lebesgue^d(E\cap\gamma_n(E_i))\leq M\Lebesgue^d(E), $$
due to the invariance of Lebesgue measure under translations. A possible constant for Eq. \eqref{eq:ass-gamma-Cn} is thus $C=M$, the number of elements of the partition, although it may not be the optimal (i.e., the smallest) one. For instance, in case of constant velocity $v_n(x)=a\in\R^d$ in $\Omega$, which gives a linear transport $\gamma_n(x)=x+a\Delta{t}\chi_\Omega(x)$, the sets $\gamma_n(E_i)$ are pairwise disjoint and the last sum in the computation above equals $\Lebesgue^d(E\cap\gamma_n(\Omega))\leq\Lebesgue^d(E)$, so that Eq. \eqref{eq:ass-gamma-Cn} holds now more precisely with $C=1$ regardless of $M$.

If the initial measure $\mu_0$ is absolutely continuous with respect to $\Lebesgue^d$, a natural question is whether the same is true for all other measures $\mu_n$ recursively generated by the push forward. Property \eqref{ass:gamma-Cn} of the motion mappings turns out to be designed precisely for this purpose, indeed we have:
\begin{theorem}	\label{theo:abscont}
If $\mu_0\ll\Lebesgue^d$ then $\mu_n\ll\Lebesgue^d$ for all $n>0$.
\end{theorem}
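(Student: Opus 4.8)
The plan is to argue by induction on $n$, the base case $n=0$ being precisely the hypothesis $\mu_0\ll\Lebesgue^d$. The whole content is therefore the inductive step: assuming $\mu_n\ll\Lebesgue^d$, we must show $\mu_{n+1}=\gamma_n\#\mu_n\ll\Lebesgue^d$. I would work directly with the null-set characterization of absolute continuity adopted in the paper, i.e. I would fix an arbitrary Borel set $E\in\B(\Omega)$ with $\Lebesgue^d(E)=0$ and prove $\mu_{n+1}(E)=0$.

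The step-by-step chain is as follows. First, since $\gamma_n$ is Borel by part (1) of Assumption \ref{ass:gamma}, the set $\gamma_n^{-1}(E)$ belongs to $\B(\Omega)$, so the defining relation \eqref{eq:pushfwd_2} of the push forward legitimately applies and gives $\mu_{n+1}(E)=\mu_n(\gamma_n^{-1}(E))$. Second, property \eqref{ass:gamma-Cn} of Assumption \ref{ass:gamma}, i.e. inequality \eqref{eq:ass-gamma-Cn}, yields $\Lebesgue^d(\gamma_n^{-1}(E))\leq C\Lebesgue^d(E)=0$, so $\gamma_n^{-1}(E)$ is a Borel set of vanishing Lebesgue measure. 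Third, the inductive hypothesis $\mu_n\ll\Lebesgue^d$ forces $\mu_n(\gamma_n^{-1}(E))=0$. Combining the three, $\mu_{n+1}(E)=0$; since $E$ was an arbitrary $\Lebesgue^d$-negligible Borel set, this means $\mu_{n+1}\ll\Lebesgue^d$ and the induction closes.

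I do not expect a genuine obstacle: this is essentially the property for which requirement \eqref{ass:gamma-Cn} was designed, as already anticipated in the discussion following Assumption \ref{ass:gamma}. The only point deserving a line of care is that one should invoke the ``null-set'' form of absolute continuity (which is exactly the meaning of $\mu\ll\Lebesgue^d$ used here) rather than the $\varepsilon$--$\delta$ form; on finite measures the two are equivalent, but the argument above does not even need that equivalence, nor any integrability of a density. If desired, one could append the remark that when the stronger hypothesis \eqref{eq:gamma_invLip} holds, the Radon--Nikodym densities of the $\mu_n$ can moreover be followed explicitly through the usual change-of-variables formula, but this is not required for the statement as given.
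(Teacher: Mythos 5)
Your proof is correct and follows essentially the same route as the paper's: induction on $n$, with the inductive step combining the push forward identity \eqref{eq:pushfwd_2}, the bound \eqref{eq:ass-gamma-Cn} on $\Lebesgue^d(\gamma_n^{-1}(E))$, and the null-set characterization of $\mu_n\ll\Lebesgue^d$. The only difference is that you spell out the role of the Borel measurability of $\gamma_n$ explicitly, which the paper leaves implicit.
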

\begin{proof}
We proceed inductively on $n$. Assume $\mu_n\ll\Lebesgue^d$ for a certain $n$ and consider $E\in\B(\Omega)$ such that $\Lebesgue^d(E)=0$. Then $\Lebesgue^d(\gamma_n^{-1}(E))\leq C\Lebesgue^d(E)=0$, whence 
$$ \mu_{n+1}(E)=\mu_n(\gamma_n^{-1}(E))=0, $$
i.e., $\mu_{n+1}\ll\Lebesgue^d$. Owing to $\mu_0\ll\Lebesgue^d$, we finally get by induction the thesis.
\end{proof}

When a density $\rho_n\in L^1(\Omega)$, $\rho_n\geq 0$ a.e. in $\Omega$, exists for the measures $\mu_n$, Eq. \eqref{eq:pushfwd_4} rewrites as
\begin{equation}
	\lint_\Omega\eta(x)\rho_{n+1}(x)\,dx=\lint_\Omega\eta(\gamma_n(x))\rho_n(x)\,dx
	\label{eq:pushfwd_dens}
\end{equation}
for all bounded and Borel functions $\eta:\Omega\to\R$. The new unknowns of the problem are now the $\rho_n$'s, for which we can state the following properties.
\begin{theorem}	\label{theo:wellpos}
Let $\rho_0\in L^1(\Omega)$ be given, $\rho_0\geq 0$ a.e. in $\Omega$. Then there exists a unique sequence $\{\rho_n\}_{n>0}\subset L^1(\Omega)$, $\rho_n\geq 0$ a.e. in $\Omega$, solving Eq. \eqref{eq:pushfwd_dens} with
$\rho_0$ as initial datum, and moreover
$$ \|\rho_n\|_1=\|\rho_0\|_1, \qquad \forall\,n>0. $$
If, in addition, $\rho_0\in L^1(\Omega)\cap L^\infty(\Omega)$ then also $\rho_n\in L^1(\Omega)\cap L^\infty(\Omega)$ with
$$ \|\rho_n\|_\infty\leq C^n\|\rho_0\|_\infty, \qquad \forall\,n>0. $$
\end{theorem}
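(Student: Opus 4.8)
The plan is to run an induction on $n$: at each step the push-forward \emph{produces} the next measure, and the Radon--Nikodym theorem certifies that it has a density in $L^1(\Omega)$. Starting from the finite positive measure $\mu_0$ with $d\mu_0=\rho_0\,dx$, generate $\{\mu_n\}_{n>0}$ by the recurrence \eqref{eq:pushfwd}, understood as \eqref{eq:pushfwd_2}; each $\mu_{n+1}$ is a positive measure with $\mu_{n+1}(\Omega)=\mu_n(\gamma_n^{-1}(\Omega))=\mu_n(\Omega)=\|\rho_0\|_1<+\infty$. Since $\mu_0\ll\Lebesgue^d$, Theorem \ref{theo:abscont} gives $\mu_n\ll\Lebesgue^d$ for all $n$, so the Radon--Nikodym theorem yields $\rho_n:=d\mu_n/d\Lebesgue^d\in L^1(\Omega)$, $\rho_n\geq 0$ a.e., with $\int_\Omega\rho_n\,dx=\mu_n(\Omega)$ (and $\rho_0$ is recovered for $n=0$). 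Writing $d\mu_{n+1}=\rho_{n+1}\,dx$, $d\mu_n=\rho_n\,dx$ in \eqref{eq:pushfwd_4} turns it into \eqref{eq:pushfwd_dens}, so these $\rho_n$ solve the scheme: existence is done.

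For uniqueness, if $\rho_{n+1}$ and $\tilde\rho_{n+1}$ both satisfy \eqref{eq:pushfwd_dens} against the same $\rho_n$, subtracting gives $\int_\Omega\eta(x)\bigl(\rho_{n+1}(x)-\tilde\rho_{n+1}(x)\bigr)\,dx=0$ for every bounded Borel $\eta:\Omega\to\R$; taking $\eta=\chi_{\{\rho_{n+1}>\tilde\rho_{n+1}\}}$ and then $\eta=\chi_{\{\rho_{n+1}<\tilde\rho_{n+1}\}}$ forces $\rho_{n+1}=\tilde\rho_{n+1}$ a.e.\ in $\Omega$. Hence the sequence is uniquely determined by $\rho_0$.

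The two norm statements come from specializing $\eta$ in \eqref{eq:pushfwd_dens}. Taking $\eta\equiv 1$ on $\Omega$ (bounded and Borel, with $\eta\circ\gamma_n\equiv 1$), and using $\rho_n\geq 0$, gives $\|\rho_{n+1}\|_1=\int_\Omega\rho_{n+1}\,dx=\int_\Omega\rho_n\,dx=\|\rho_n\|_1$, so $\|\rho_n\|_1=\|\rho_0\|_1$ by induction. For the $L^\infty$ bound, suppose $\rho_n\in L^\infty(\Omega)$; for any $E\in\B(\Omega)$,
\[
\mu_{n+1}(E)=\lint_{\gamma_n^{-1}(E)}\rho_n\,dx\leq\|\rho_n\|_\infty\,\Lebesgue^d\bigl(\gamma_n^{-1}(E)\bigr)\leq C\|\rho_n\|_\infty\,\Lebesgue^d(E),
\]
the last inequality being property \eqref{ass:gamma-Cn} of Assumption \ref{ass:gamma}. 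Since $\mu_{n+1}(E)=\int_E\rho_{n+1}\,dx$, if $A:=\{x\in\Omega:\rho_{n+1}(x)>C\|\rho_n\|_\infty\}$ had positive Lebesgue measure we would get $\mu_{n+1}(A)>C\|\rho_n\|_\infty\,\Lebesgue^d(A)$, a contradiction; hence $\rho_{n+1}\in L^\infty(\Omega)$ with $\|\rho_{n+1}\|_\infty\leq C\|\rho_n\|_\infty$, and iterating yields $\|\rho_n\|_\infty\leq C^n\|\rho_0\|_\infty$.

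No step here is deep; the content is bookkeeping, and the two points worth a line of care are that the Radon--Nikodym derivative is \emph{globally} (not merely locally) integrable on $\Omega$ --- immediate from finiteness and conservation of $\mu_n(\Omega)$ --- and that \eqref{eq:pushfwd_dens} may be tested against characteristic functions, which is legitimate because it is posited for all bounded Borel $\eta$. The only genuine use of Assumption \ref{ass:gamma} is in the $L^\infty$ propagation, where the constant $C$ of \eqref{eq:ass-gamma-Cn} is exactly what turns the volume control on $\gamma_n^{-1}(E)$ into the growth factor $C^n$; the $L^\infty$ bound is uniform in $n$ precisely when $C$ may be taken $\leq 1$, as discussed before the statement.
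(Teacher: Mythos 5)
Your proof is correct, and while existence (via Theorem \ref{theo:abscont} and Radon--Nikodym) and the $L^1$ conservation (testing with $\eta\equiv 1$) coincide with the paper's argument, you take genuinely different routes on the other two points. For uniqueness the paper restricts to $\eta\in C_0(\Omega)$ and invokes the Riesz representation theorem to conclude that $(\rho_{n+1}-\rho'_{n+1})\,dx$ is the null measure, whereas you exploit the full strength of the hypothesis that \eqref{eq:pushfwd_dens} holds for \emph{all} bounded Borel $\eta$ and simply test with $\chi_{\{\rho_{n+1}>\tilde\rho_{n+1}\}}$ and $\chi_{\{\rho_{n+1}<\tilde\rho_{n+1}\}}$ (choosing Borel representatives, which is harmless); this is more elementary and avoids duality altogether. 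For the $L^\infty$ propagation the paper builds a piecewise constant approximation $s$ of $\rho_{n+1}$ over a partition of $\Omega$, bounds its coefficients by $C\|\rho_n\|_\infty$ through Assumption \ref{ass:gamma}, and passes to the limit via the Lebesgue differentiation theorem; you instead compare measures directly, $\mu_{n+1}(E)\leq\|\rho_n\|_\infty\,\mathcal{L}^d(\gamma_n^{-1}(E))\leq C\|\rho_n\|_\infty\,\mathcal{L}^d(E)$ for every Borel $E$, and read off the a.e.\ bound on the density by contradiction on the set $A=\{\rho_{n+1}>C\|\rho_n\|_\infty\}$ (strictness of the resulting inequality on $A$ is legitimate since $\rho_{n+1}-C\|\rho_n\|_\infty>0$ a.e.\ on $A$ and $\mathcal{L}^d(A)<\infty$). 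Your version is shorter and needs no differentiation theorem; the paper's construction, on the other hand, is essentially the cell-averaging machinery reused later for the numerical scheme, so it doubles as preparation for Section \ref{subsect:spat_approx}. Both arguments rely on \eqref{eq:ass-gamma-Cn} in exactly the same place, and your closing remark that the growth factor $C^n$ disappears when $C\leq 1$ is consistent with the paper's discussion.
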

\begin{proof}
Existence of a sequence $\{\rho_n\}_{n>0}\subset L^1(\Omega)$, $\rho_n\geq 0$ a.e. in $\Omega$, solving Eq. \eqref{eq:pushfwd_dens} is implied by Theorem \ref{theo:abscont} if one understands $\rho_0$ as the density of
$\mu_0$ with respect to $\Lebesgue^d$.
\begin{myenumerate}
\item To obtain uniqueness, assume first that
$\rho_{n+1},\,\rho'_{n+1}\in L^1(\Omega)$ are such that
$$ \lint_\Omega\eta(x)\rho_{n+1}(x)\,dx=\lint_\Omega\eta(x)\rho'_{n+1}(x)\,dx=
	\lint_\Omega\eta(\gamma_n(x))\rho_n(x)\,dx, $$
that is
$$ \lint_\Omega\eta(x)(\rho_{n+1}(x)-\rho'_{n+1}(x))\,dx=0 $$
for all bounded and Borel functions $\eta:\Omega\to\R$. Confining the attention to the continuous $\eta$ compactly supported in $\Omega$, i.e., $\eta\in C_0(\Omega)$, we see that, owing to Riesz representation theorem, this relation defines the null functional in the dual space $(C_0(\Omega))'$. Therefore $(\rho_{n+1}-\rho'_{n+1})\,dx$ must be the null measure, which implies $\rho_{n+1}(x)=\rho'_{n+1}(x)$ for a.e. $x\in\Omega$, whence the $L^1$-uniqueness of the density at the $(n+1)$-th time step. Proceeding now inductively from $\rho_0$, we get the uniqueness in $L^1(\Omega)$ of the sequence of densities $\rho_n$.

\item When estimating the $L^1$-norm of the $\rho_n$'s we can take advantage of their nonnegativity to discover, from Eq. \eqref{eq:pushfwd_dens} with $\eta=\chi_\Omega$:
$$ \|\rho_{n+1}\|_1=\lint_\Omega\rho_n(x)\,dx=\|\rho_n\|_1. $$
Thus, by induction, $\|\rho_n\|_1=\|\rho_0\|_1$ each $n>0$ as desired.

\item In order to prove the last statement of the theorem, we consider a pairwise disjoint partition $\{E_i\}_{i=1}^M$ of $\Omega$ and construct the following step function:
$$ s(x)=\sum_{i=1}^M\alpha_i\chi_{E_i}(x), \qquad
	\alpha_i=\frac{1}{\Lebesgue^d(E_i)}\lint_{\gamma_n^{-1}(E_i)}\rho_n(x)\,dx $$
for every $E_i$ such that $\Lebesgue^d(E_i)\ne 0$. Notice that
\begin{equation*}
	\|s\|_1=\sum_{i=1}^M\alpha_i\Lebesgue^d(E_i)=
		\sum_{i=1}^M\lint_{\gamma_n^{-1}(E_i)}\rho_n(x)\,dx=
		\lint_{\gamma_n^{-1}(\Omega)}\rho_n(x)\,dx=\|\rho_n\|_1,
\end{equation*}
and moreover $\alpha_i\geq 0$, therefore $s\in L^1(\Omega)$, $s\geq 0$ a.e. in $\Omega$. If $\Lebesgue^d(E_i)=0$ for some $i$, then the corresponding coefficient $\alpha_i$ can be arbitrarily defined without affecting $s$ as an element of $L^1(\Omega)$.

\item Set
$$ m:=\max_{i=1,\,\dots,\,M}\Lebesgue^d(E_i). $$
We claim that, for $m\to 0^+$, $s$ converges pointwise almost everywhere in $\Omega$ to $\rho_{n+1}$. For this we observe, first of all, that, taking $\eta=\chi_{E_i}$ in Eq. \eqref{eq:pushfwd_dens}, the coefficients $\alpha_i$ can be rewritten as
$$ \alpha_i=\frac{1}{\Lebesgue^d(E_i)}\lint_{E_i}\rho_{n+1}(x)\,dx. $$
Then we fix $x\in\Omega$ and notice that there exists a unique index $1\leq i=i(x)\leq M$ such that $x\in E_{i(x)}$, because the partition is pairwise disjoint. Thus $s(x)=\alpha_{i(x)}$ and
\begin{align*}
	\left\vert\rho_{n+1}(x)-s(x)\right\vert &=
		\left\vert\rho_{n+1}(x)-\frac{1}{\Lebesgue^d(E_{i(x)})}\lint_{E_{i(x)}}\rho_{n+1}(y)\,dy\right\vert \\
	&\leq \frac{1}{\Lebesgue^d(E_{i(x)})}\lint_{E_{i(x)}}\left\vert\rho_{n+1}(x)-\rho_{n+1}(y)\right\vert\,dy.
\end{align*}
As $m\to 0^+$, $E_{i(x)}$ shrinks to $x$ and Lebesgue differentiation theorem implies that the right-hand side of the above inequality tends to zero, therefore
$$ \lim_{m\to 0^+} \vert\rho_{n+1}(x)-s(x)\vert=0 \qquad \text{for a.e.\ } x\in\Omega, $$
which gives the desired pointwise convergence.

\item Assume $\rho_n\in L^1(\Omega)\cap L^\infty(\Omega)$, then
$$ \alpha_i\leq\frac{\Lebesgue^d(\gamma_n^{-1}(E_i))}{\Lebesgue^d(E_i)}
	\|\rho_n\|_\infty\leq C\|\rho_n\|_\infty, \qquad \forall\,i=1,\,\dots,\,M, $$
hence $s\in L^\infty(\Omega)$ with $s(x)\leq C\|\rho_n\|_\infty$ for all $x\in\Omega$. Consequently, in view of the pointwise convergence, also $\rho_{n+1}(x)\leq C\|\rho_n\|_\infty$ for a.e. $x\in\Omega$, which says $\rho_{n+1}\in L^\infty(\Omega)$ with
$$ \|\rho_{n+1}\|_\infty\leq C\|\rho_n\|_\infty. $$
Proceeding inductively from $\rho_0$ yields finally $\rho_n\in L^\infty(\Omega)$ with the desired estimate on the $L^\infty$-norm. \qedhere
\end{myenumerate}
\end{proof}

\subsection{Spatial approximation}	\label{subsect:spat_approx}
We turn now our attention to the construction of a spatial approximation of the push forward \eqref{eq:pushfwd_dens}, which will result in a computational scheme for the numerical treatment of the problem
\begin{equation}
	\begin{cases}
		\text{Find\ } \{\rho_n\}_{n>0}\subset L^1(\Omega)\cap L^\infty(\Omega),\,
			\rho_n\geq 0\ \text{a.e. in\ } \Omega\ \text{for all\ } n>0,\ \text{such that} \\[0.3cm]
		\displaystyle{\lint_\Omega}\eta(x)\rho_{n+1}(x)\,dx=
			\displaystyle{\lint_\Omega}\eta(\gamma_n(x))\rho_n(x)\,dx, \qquad
			\forall\,\eta:\Omega\to\R\ \text{bounded and Borel}
	\end{cases}
	\label{eq:problem}
\end{equation}
for a prescribed initial datum $\rho_0\in L^1(\Omega)\cap L^\infty(\Omega)$, $\rho_0\geq 0$ a.e. in $\Omega$. Theorem \ref{theo:wellpos} gives the well posedness of this problem, i.e., existence and uniqueness of the solution, with suitable estimates on the $L^1$ and $L^\infty$-norms in terms of the corresponding norms of the initial datum.

To be definite, we consider as domain $\Omega$ a (hyper)cube in $\R^d$, that we partition by a family of nested pairwise disjoint grids $\{E_j^h\}_{j=1}^{M_h}$ made in turn of (hyper)cubes with edges of constant length $h>0$, so that $\Lebesgue^d(E_j^h)=h^d$ for each $j$:
$$ \bigcup_{j=1}^{M_h}E_j^h=\Omega, \qquad\Int(E_i^h)\cap\Int(E_j^h)=\emptyset\quad\forall\,i\ne j. $$
The nesting of the grids implies that, given $0<h''<h'$, for all $0\leq j\leq M_{h''}$ there exists $0\leq i\leq M_{h'}$ such that $E_j^{h''}\subseteq E_i^{h'}$. Essentially the same construction described below may be in principle repeated for domains and/or grid elements of different shapes, up to possible technicalities in the practical implementation of the resulting numerical scheme.

One of the most natural ways to approximate Problem \eqref{eq:problem} is to introduce the following piecewise constant functions over the grid $\{E_j^h\}_{j=1}^{M_h}$:
$$ P^n_h(x)=\sum_{j=1}^{M_h}\rho_{j,h}^n\chi_{E_j^h}(x), \qquad n\geq 0, $$
such that $P^n_h(x)\equiv\rho_{j,h}^n$ for all $x\in E_j^h$, and to replace $\rho_n$, $\rho_{n+1}$ in Eq. \eqref{eq:pushfwd_dens} by $P^n_h$, $P^{n+1}_h$ respectively. Notice that each $P^n_h$ is integrable over $\Omega$, with
$$ \|P^n_h\|_1=h^d\sum_{j=1}^{M_h}\vert\rho_{j,h}^n\vert. $$

In addition, one can in general expect that the true function $\gamma_n$, which operates the push forward from the time step $n$ to the time step $n+1$, is not known exactly: Most frequently, an approximation $g^n_h:\Omega\to\Omega$ of it is available, resulting from some discretization of the velocity $v_n$. In the sequel, we will consider specifically
\begin{equation*}
	g^n_h(x)=x+u^n_h(x)\Delta{t},
	\label{eq:gn}
\end{equation*}
where $u^n_h$ is piecewise constant over the grid $\{E_j^h\}_{j=1}^{M_h}$:
\begin{equation*}
	u^n_h(x)=\sum_{j=1}^{M_h}u^n_{j,h}\chi_{E_j^h}(x), \qquad u^n_{j,h}\in\R^d.
	\label{eq:un}
\end{equation*}

If the velocity field $v_n$ is continuous in $\Omega$, so that its pointwise values make sense, then we define
$$ u^n_{j,h}=v_n(x_j), $$
$x_j$ being the center of $E_j^h$. Otherwise, assuming $v_n\in{(L^1(\Omega))}^d$, we take the approximate velocity field $u_h^n$ to be defined by
$$ u^n_{j,h}=\frac{1}{h^d}\lint_{E_j^h}v_n(x)\,dx, $$
which, owing to Lebesgue differentiation theorem, converges to $v_n$ in ${(L^1(\Omega))}^d$ as the grid is refined, i.e., when $h\to 0^+$. Notice that in both cases $u^n_h$ converges pointwise to $v_n$ for $h\to 0^+$ (up to possibly passing to subsequences and discarding Lebesgue negligible subsets in the second case).

Besides this approximation property, we further require:
\begin{assumption}	\label{ass:CFL}
Let $h,\,\Delta{t}>0$ be such that
\begin{equation}
	\Delta{t}\max_{j=1,\,\dots,\,M_h}\|u^n_{j,h}\|\leq h,
	\label{eq:CFL}
\end{equation}
where $\|\cdot\|$ is any norm in $\R^d$.
\end{assumption}
Equation \eqref{eq:CFL} reminds of the classical CFL condition arising for the stability of numerical schemes designed to approximate hyperbolic conservation laws. In the present context, it imposes a limitation on the maximum displacement of any grid cell $E_i^h$ produced by $g^n_h$, so that when computing
$$ \Lebesgue^d((g^n_h)^{-1}(E_i^h))=\sum_{j=1}^{M_h}\Lebesgue^d({(g_h^n)}^{-1}(E_i^h)\cap E_j^h)=
	\sum_{j=1}^{M_h}\Lebesgue^d(E_i^h\cap g_h^n(E_j^h)) $$
we see that there is a fixed maximum number, say $c$, of non-empty intersections $E_i^h\cap g^n_h(E_j^h)$ for $i,\,j$ running from $1$ to $M_h$, hence
$$ \Lebesgue^d((g^n_h)^{-1}(E_i^h))\leq c\Lebesgue^d(E_i^h)=ch^d. $$
In addition, $g^n_h$ being a piecewise translation on $\Omega$, the measure $g^n_h\#\Lebesgue^d$ is absolutely continuous with respect to $\Lebesgue^d$, and its density $r^n_h\in L^1(\Omega)$ satisfies, for a.e. $x\in\Omega$, $r^n_h(x)\geq 0$ and:
$$ r^n_h(x)=\lim_{R\to 0}\frac{\Lebesgue^d((g^n_h)^{-1}(B_R(x)))}{\Lebesgue^d(B_R(x))}=
	\lim_{R\to 0}\frac{\sum_{j=1}^{M_h}\Lebesgue^d(B_R(x)\cap g^n_h(E_j^h))}{\Lebesgue^d(B_R(x))}. $$
For $R$ sufficiently small and for a.e. $x\in\Omega$ the ball $B_R(x)$ is fully contained into one of the grid cells, thus in view of Eq. \eqref{eq:CFL} we have
$$ r^n_h(x)\leq\lim_{R\to 0}\frac{c\Lebesgue^d(B_R(x))}{\Lebesgue^d(B_R(x))}=c $$
and finally
\begin{equation}
	\Lebesgue^d((g^n_h)^{-1}(E))=\lint_E r^n_h(x)\,dx\leq c\Lebesgue^d(E), \qquad \forall\,E\in\B(\Omega),
	\label{eq:gn-c}
\end{equation}
which compares with the analogous property of $\gamma_n$ stated by Assumption \ref{ass:gamma}.

Bearing all of this in mind, a numerical scheme is obtained by imposing that the $P^n_h$'s satisfy Eq. \eqref{eq:pushfwd_dens} with the choice $\eta=\chi_{E_i^h}$, $i=1,\,\dots,\,M_h$, under the action of the mapping $g^n_h$: This gives
$$ \lint_{E_i^h}P^{n+1}_h(x)\,dx=\lint_{(g^n_h)^{-1}(E_i^h)}P^n_h(x)\,dx, $$
whence, using the specific representation of $P^n_h$, $P^{n+1}_h$ , the following scheme is derived:
\begin{equation}
	\rho_{i,h}^{n+1}=\frac{1}{h^d}\sum_{j=1}^{M_h}\rho_{j,h}^n\Lebesgue^d({(g_h^n)}^{-1}(E_i^h)\cap E_j^h),
		\qquad i=1,\,\dots,\,M_h\ \text{and\ } n\geq 0.
	\label{eq:numscheme}
\end{equation}

It is customary for the sequel to define the measures $\{\lambda_h^n\}_{n\geq 0}$ such that
$$ d\lambda^n_h:=P^n_h\,dx. $$
Notice that Eq. \eqref{eq:numscheme} does \emph{not} correspond to the push forward of $\lambda^n_h$ by $g^n_h$, i.e., $\lambda^{n+1}_h\ne g^n_h\#\lambda^n_h$, indeed the measure $g^n_h\#\lambda^n_h$ is in general not piecewise constant on the grid $\{E_j^h\}_{j=1}^{M_h}$, even for piecewise constant $\lambda^n_h$ and for a piecewise translation $g^n_h$, while $\lambda^{n+1}_h$ is forced to be so by construction. However, the scheme \eqref{eq:numscheme} enjoys some nice properties that we briefly list below:
\begin{myenumerate}
\item It is \emph{positivity-preserving}, in the sense that, given a discretization $P^0_h$ of the initial density $\rho_0$ such that $\rho_{j,h}^0\geq 0$ for all $j$, it results $\rho_{j,h}^n\geq 0$ for all $j$ and all $n>0$. This is consistent with the expected nonnegativity of the true densities $\rho_n$'s.

\item It is \emph{conservative}, indeed
\begin{align*}
	\|P^{n+1}_h\|_1 &= h^d\sum_{i=1}^{M_h}\rho^{n+1}_{i,h}=\sum_{j=1}^{M_h}\rho^n_{j,h}\sum_{i=1}^{M_h}
		\Lebesgue^d({(g_h^n)}^{-1}(E_i^h)\cap E_j^h) \\
	&= \sum_{j=1}^{M_h}\rho^n_{j,h}\Lebesgue^d({(g_h^n)}^{-1}(\Omega)\cap E_j^h)=h^d\sum_{j=1}^{M_h}\rho^n_{j,h}=\|P^n_h\|_1,
\end{align*}
hence
$$ \|P^n_h\|_1=\|P^0_h\|_1, \qquad \forall\,n>0, $$
in accordance with the analogous property stated by Theorem \ref{theo:wellpos} for the $\rho_n$'s.

\item It is \emph{boundedness-preserving}, in the sense that if the initial datum $\rho_0$ is discretized in such a way that $P^0_h\in L^\infty(\Omega)$ uniformly in $h$, i.e., there exists a constant $B_0\geq 0$, independent of $h$, such that
$$ \|P^0_h\|_\infty=\max_{j=1,\,\dots,\,M_h}\vert\rho^0_{j,h}\vert\leq B_0, $$
then $P^n_h\in L^\infty(\Omega)$ uniformly in $h$ as well, with
$$ \|P^n_h\|_\infty\leq c^nB_0, \qquad \forall\,n>0. $$ 
This follows by induction from Eqs. \eqref{eq:numscheme} and \eqref{eq:gn-c}, whence we get
$$ \|P^{n+1}_h\|_\infty=\max_{i=1,\,\dots,\,M_h}\vert\rho_{i,h}^{n+1}\vert
	\leq\max_{i=1,\,\dots,\,M_h}\frac{\|P^n_h\|_\infty}{h^d}\Lebesgue^d({(g_h^n)}^{-1}(E_i^h))\leq c\|P^n_h\|_\infty, $$
and is again consistent with both the boundedness of the true densities $\rho_n$ claimed by Theorem \ref{theo:wellpos} and the related estimate on their $L^\infty$-norm.
\end{myenumerate}

Given any piecewise constant function $P_h\in L^1(\Omega)\cap L^\infty(\Omega)$ on the grid $\{E_j^h\}_{i=1}^{M_h}$:
$$ P_h(x)=\sum_{j=1}^{M_h}\rho_{j,h}\chi_{E_j^h}(x), \qquad \rho_{j,h}\geq 0\quad \forall\,j=1,\,\dots,\,M_h, $$
and defined the measure $d\lambda_h=P_h\,dx$, let us denote by $g_h^n\apppf\lambda_h$ the measure whose density with respect to $\Lebesgue^d$ is the function
$$ (g_h^n\apppf P_h)(x)=\sum_{i=1}^{M_h}\left(\frac{1}{h^d}\sum_{j=1}^{M_h}\rho_{j,h}
	\Lebesgue^d({(g_h^n)}^{-1}(E_i^h)\cap E_j^h)\right)\chi_{E_i^h}(x). $$
Notice that, using this notation, we can rewrite the scheme \eqref{eq:numscheme} compactly as $\lambda_h^{n+1}=g_h^n\apppf\lambda_h^n$.
From the previous reasonings we know that, in general, $(g_h^n\apppf\lambda_h)(E)\ne (g_h^n\#\lambda_h)(E)$ for $E\in \B(\Omega)$, however:
\begin{lemma}	\label{lemma:pushfwd}
For all grid elements $\{E_k^h\}_{k=1}^{M_h}$ it results
$$ (g_h^n\apppf \lambda_h)(E_k^h)=(g_h^n\# \lambda_h)(E_k^h). $$
\end{lemma}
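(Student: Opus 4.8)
The plan is simply to evaluate the two measures on a grid cell $E_k^h$ by unwinding their definitions; the equality then drops out because the defining formula \eqref{eq:numscheme} of $g_h^n\apppf\lambda_h$ was tailored precisely to reproduce the cell masses of the true push forward.

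First I would compute the right-hand side. Since $d\lambda_h=P_h\,dx$ with $P_h=\sum_{j=1}^{M_h}\rho_{j,h}\chi_{E_j^h}$, the definition of push forward \eqref{eq:pushfwd_2} gives
$$ (g_h^n\#\lambda_h)(E_k^h)=\lambda_h\bigl((g_h^n)^{-1}(E_k^h)\bigr)=\lint_{(g_h^n)^{-1}(E_k^h)}P_h(x)\,dx
	=\sum_{j=1}^{M_h}\rho_{j,h}\Lebesgue^d\bigl((g_h^n)^{-1}(E_k^h)\cap E_j^h\bigr), $$
where in the last step one splits the integration domain over the grid cells $E_j^h$, on each of which $P_h$ is constant (the pairwise overlaps being $\Lebesgue^d$-negligible).

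Next I would compute the left-hand side. By construction the density of $g_h^n\apppf\lambda_h$ with respect to $\Lebesgue^d$ is the piecewise constant function $g_h^n\apppf P_h$, so
$$ (g_h^n\apppf\lambda_h)(E_k^h)=\lint_{E_k^h}(g_h^n\apppf P_h)(x)\,dx
	=\lint_{E_k^h}\sum_{i=1}^{M_h}\left(\frac{1}{h^d}\sum_{j=1}^{M_h}\rho_{j,h}
		\Lebesgue^d\bigl((g_h^n)^{-1}(E_i^h)\cap E_j^h\bigr)\right)\chi_{E_i^h}(x)\,dx. $$
Since $E_k^h$ is itself one of the grid cells and $\Lebesgue^d(E_i^h\cap E_k^h)=0$ for $i\ne k$, only the term $i=k$ survives the integration, contributing a factor $\Lebesgue^d(E_k^h)=h^d$; this cancels the $1/h^d$ and leaves exactly the expression obtained above for $(g_h^n\#\lambda_h)(E_k^h)$, which proves the lemma.

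I do not expect a genuine obstacle here: the statement is a bookkeeping identity, and the only point to keep an eye on is that the characteristic functions $\chi_{E_i^h}$ do not form a pointwise partition of unity (adjacent cells share boundary faces), so the isolation of the $i=k$ contribution must be read as an almost-everywhere statement — which is harmless since everything is phrased in terms of Lebesgue integrals.
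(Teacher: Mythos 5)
Your proof is correct, but it takes a shorter route than the paper's. You evaluate $(g_h^n\#\lambda_h)(E_k^h)$ directly from the inverse-image definition \eqref{eq:pushfwd_2}, split the integral of $P_h$ over the grid cells, and observe that both sides collapse to the common expression $\sum_{j=1}^{M_h}\rho_{j,h}\Lebesgue^d\bigl((g_h^n)^{-1}(E_k^h)\cap E_j^h\bigr)$; nothing beyond bookkeeping (and the harmless a.e.\ caveat about shared cell faces, which you correctly flag) is needed. The paper instead first proves that $g_h^n\#\lambda_h\ll\Lebesgue^d$ via the bound \eqref{eq:gn-c}, computes its density $(g_h^n\# P_h)(x)=\sum_j\rho_{j,h}\chi_{g_h^n(E_j^h)}(x)$ by Lebesgue differentiation, and then matches cell integrals using the translation-invariance identity $\Lebesgue^d\bigl((g_h^n)^{-1}(E_k^h)\cap E_j^h\bigr)=\Lebesgue^d\bigl(E_k^h\cap g_h^n(E_j^h)\bigr)$. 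Your argument avoids the absolute-continuity step, the differentiation theorem, and translation invariance altogether, so it is the more elementary proof of the lemma as stated. What the paper's longer detour buys is the explicit density formula for $g_h^n\# P_h$, which is reused later (the proof of Theorem \ref{theo:onestepstab} derives the analogous formula for $\gamma_n\# P_h$ ``by a procedure analogous to that used in the proof of Lemma \ref{lemma:pushfwd}''); your proof, while cleaner here, does not produce that by-product.
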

\begin{proof}
\begin{myenumerate}
\item We preliminarily observe that $g_h^n\#\lambda_h\ll\Lebesgue^d$, indeed let $E\in\B(\Omega)$ be such that $\Lebesgue^d(E)=0$, then
\begin{align*}
	(g_h^n\#\lambda_h)(E) &= \lambda_h({(g_h^n)}^{-1}(E))=\lint_{{(g_h^n)}^{-1}(E)}P_h(x)\,dx \\
	&\leq \|P_h\|_\infty\Lebesgue^d({(g_h^n)}^{-1}(E))\leq\|P_h\|_\infty c\Lebesgue^d(E)=0.
\end{align*}
The density $g_h^n\# P_h\in L^1(\Omega)\cap L^\infty(\Omega)$ of $g_h^n\#\lambda_h$ with respect to $\Lebesgue^d$ is given, for a.e. $x\in\Omega$, by
\begin{align*}
	(g_h^n\# P_h)(x) &= \lim_{R\to 0}\frac{(g_h^n\#\lambda_h)(B_R(x))}{\Lebesgue^d(B_R(x))}
	=\sum_{j=1}^{M_h}\rho_{j,h}\lim_{R\to 0}\frac{1}{\Lebesgue^d(B_R(x))}\lint_{B_R(x)}\chi_{g^n_h(E_j^h)}(y)\,dy \\
	&= \sum_{j=1}^{M_h}\rho_{j,h}\chi_{g^n_h(E_j^h)}(x).
\end{align*}

\item Given any grid element $E_k^h$, $1\leq k\leq M_h$, we compute:
\begin{align*}
	(g_h^n\apppf\lambda_h)(E_k^h) &= \lint_{E_k^h}(g_h^n\apppf P_h)(x)\,dx=
	\sum_{j=1}^{M_h}\rho_{j,h}\Lebesgue^d(E_k^h\cap g_h^n(E_j^h)) \\
	&= \lint_{E_k^h}(g_h^n\# P_h)(x)\,dx=(g_h^n\#\lambda_h)(E_k^h)
\end{align*}
and we have the thesis. \qedhere
\end{myenumerate}
\end{proof}

From this result we can derive some stability properties of the scheme.
\begin{theorem}[One-step stability]	\label{theo:onestepstab}
Let $\mu:\B(\Omega)\to\R_+$ be a measure supported in $\Omega$, with density $\rho\in L^1(\Omega)\cap L^\infty(\Omega)$ with respect to $\Lebesgue^d$. Assume moreover that $\gamma_n:\Omega\to\Omega$ is a diffeomorphism. Then, for $h>0$ sufficiently small, there exist two constants $A,\,B\geq 0$ such that
$$ \sum_{j=1}^{M_h}\left\vert(\gamma_n\#\mu)(E_j^h)-(g_h^n\apppf\lambda_h)(E_j^h)\right\vert\leq
	A\|\rho-P_h\|_1+Bh. $$
\end{theorem}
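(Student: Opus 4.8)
The plan is to bound the sum by a triangle inequality that separates a ``density error'' from a ``geometry error'', estimating the first by an $L^1$ bound and the second by a thin-shell argument around the grid skeleton. As a preliminary step, Lemma~\ref{lemma:pushfwd} lets me replace $g_h^n\apppf\lambda_h$ by the genuine push forward $g_h^n\#\lambda_h$ on every grid element $E_j^h$. Inserting the intermediate quantity $(\gamma_n\#\lambda_h)(E_j^h)$ it then suffices to estimate
\[
(I):=\sum_{j=1}^{M_h}\bigl\vert(\gamma_n\#\mu)(E_j^h)-(\gamma_n\#\lambda_h)(E_j^h)\bigr\vert,\qquad
(II):=\sum_{j=1}^{M_h}\bigl\vert(\gamma_n\#\lambda_h)(E_j^h)-(g_h^n\#\lambda_h)(E_j^h)\bigr\vert.
\]
For $(I)$ I would write $(\gamma_n\#\mu)(E_j^h)-(\gamma_n\#\lambda_h)(E_j^h)=(\mu-\lambda_h)(\gamma_n^{-1}(E_j^h))=\lint_{\gamma_n^{-1}(E_j^h)}(\rho-P_h)\,dx$; since $\gamma_n$ is a bijection of $\Omega$ and $\{E_j^h\}_j$ partitions $\Omega$, the sets $\{\gamma_n^{-1}(E_j^h)\}_j$ cover $\Omega$ and are pairwise disjoint up to $\Lebesgue^d$-null sets (disjointness follows from $\Lebesgue^d(\gamma_n^{-1}(E_i^h\cap E_j^h))\le C\,\Lebesgue^d(E_i^h\cap E_j^h)=0$ via Eq.~\eqref{eq:gamma_invLip}, equivalently property~\eqref{ass:gamma-Cn} of Assumption~\ref{ass:gamma}). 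Hence $(I)\le\sum_j\lint_{\gamma_n^{-1}(E_j^h)}\vert\rho-P_h\vert\,dx=\|\rho-P_h\|_1$, so $A=1$ works for this piece.

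The core of the argument is $(II)$. Using $\chi_{\gamma_n^{-1}(E_j^h)}=\chi_{E_j^h}\circ\gamma_n$ and $\chi_{(g_h^n)^{-1}(E_j^h)}=\chi_{E_j^h}\circ g_h^n$, and swapping the (nonnegative) sum with the integral, one gets
\[
(II)\le\lint_\Omega\Bigl(\sum_{j=1}^{M_h}\bigl\vert\chi_{E_j^h}(\gamma_n(x))-\chi_{E_j^h}(g_h^n(x))\bigr\vert\Bigr)P_h(x)\,dx.
\]
For $\Lebesgue^d$-a.e.\ $x$ both $\gamma_n(x)$ and $g_h^n(x)$ lie in a single grid cell --- the preimages of the skeleton $\Sigma_h:=\bigcup_j\partial E_j^h$ under $\gamma_n$ and under the piecewise translation $g_h^n$ are $\Lebesgue^d$-null, by Eq.~\eqref{eq:gamma_invLip} in the first case and by translation-invariance of $\Lebesgue^d$ in the second --- so the inner sum is $0$ if these cells coincide and $2$ otherwise, and in the latter case the segment $[\gamma_n(x),g_h^n(x)]\subset\Omega$ must cross $\Sigma_h$, forcing $\mathrm{dist}(\gamma_n(x),\Sigma_h)\le\delta:=\|\gamma_n-g_h^n\|_{L^\infty(\Omega)}$. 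Therefore $(II)\le2\|P_h\|_\infty\,\Lebesgue^d\bigl(\gamma_n^{-1}(\Sigma_h^\delta)\bigr)$, where $\Sigma_h^\delta$ is the closed $\delta$-neighbourhood of $\Sigma_h$. Inside each cube of side $h$ the $\delta$-collar has measure $h^d-(h-2\delta)^d\le 2d\,\delta\,h^{d-1}$ once $\delta\le h/2$, and there are $M_h=\Lebesgue^d(\Omega)/h^d$ cells, so $\Lebesgue^d(\Sigma_h^\delta\cap\Omega)\le 2d\,\Lebesgue^d(\Omega)\,\delta/h$; since $\gamma_n$ is a diffeomorphism of the compact convex cube $\Omega$, its inverse is Lipschitz, so Eq.~\eqref{eq:gamma_invLip} holds with some constant $C_{\gamma_n}$ and $\Lebesgue^d(\gamma_n^{-1}(\Sigma_h^\delta))\le 2d\,C_{\gamma_n}\,\Lebesgue^d(\Omega)\,\delta/h$.

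It remains to show $\delta=\mathcal{O}(h^2)$, which is precisely where Assumption~\ref{ass:CFL} enters. Being a diffeomorphism on a compact set, $\gamma_n$ is $C^1$, hence $v_n=(\gamma_n-\mathrm{id})/\Delta t$ is Lipschitz with some constant $\ell$; taking $u_{j,h}^n=v_n(x_j)$ at the cell centre gives $\|v_n-u_h^n\|_\infty\le\ell\sqrt d\,h/2$, so $\delta=\Delta t\,\|v_n-u_h^n\|_\infty\le\ell\sqrt d\,\Delta t\,h/2$, and the CFL bound $\Delta t\le h/m_n$ with $m_n:=\max_j\|u_{j,h}^n\|$ (if $m_n=0$ then $\gamma_n=g_h^n=\mathrm{id}$ and $(II)=0$) yields $\delta\le\ell\sqrt d\,h^2/(2m_n)$. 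Substituting back, $(II)\le 2d^{3/2}\ell\,C_{\gamma_n}\Lebesgue^d(\Omega)\,m_n^{-1}\,\|P_h\|_\infty\,h=:Bh$, with $B$ independent of $h$ provided $\|P_h\|_\infty$ stays bounded as $h\to0^+$ (true for the natural cell-average discretisation, where $\|P_h\|_\infty\le\|\rho\|_\infty$). The whole estimate is valid as soon as $h$ is small enough that $\delta\le h/2$, and adding $(I)$ and $(II)$ gives the thesis.

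I expect the main obstacle to be the geometric estimate for $(II)$: the measure of the ``bad set'' on which $\gamma_n$ and $g_h^n$ disagree about the containing cell is comparable to the $\delta$-collar of $\Sigma_h$, whose measure grows like $\delta/h$, so the argument only closes because the CFL condition forces the motion-mapping discrepancy $\delta$ to be quadratically small in $h$; making this interplay quantitative, and keeping track of which constants may depend on $\gamma_n$, $\Delta t$, $\rho$, $\Omega$ and $d$ but not on $h$, is the delicate point. (Routing instead through $g_h^n\#\mu$ would work equally well, since $\{(g_h^n)^{-1}(E_j^h)\}_j$ is also a partition of $\Omega$ up to null sets because the target cells are.)
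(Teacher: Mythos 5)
Your proof is correct, and while it opens exactly as the paper does --- Lemma~\ref{lemma:pushfwd} to replace $g_h^n\apppf\lambda_h$ by $g_h^n\#\lambda_h$ on the cells, then the triangle inequality through the intermediate measure $\gamma_n\#\lambda_h$, giving the same two pieces $\I_1$ and $\I_2$ --- the way you estimate each piece is genuinely different. For $\I_1$ the paper changes variables twice and invokes the Jacobian bounds of the diffeomorphism, ending with $A=\Lambda_-^d\Lambda_+^d$; you simply note that $(\gamma_n\#\mu-\gamma_n\#\lambda_h)(E_j^h)=\int_{\gamma_n^{-1}(E_j^h)}(\rho-P_h)\,dx$ and that the preimages $\gamma_n^{-1}(E_j^h)$ are essentially disjoint, which is both shorter and sharper ($A=1$), and uses the diffeomorphism only through injectivity plus property \eqref{eq:ass-gamma-Cn}. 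For $\I_2$ the paper works on the target side: it computes the densities of $\gamma_n\#\lambda_h$ and $g_h^n\#\lambda_h$, expands $\det\gamma_n'=1+\Delta t\,\tr v_n'+o(\Delta t)$, and bounds cell by cell the symmetric differences $\Lebesgue^d(\gamma_n(E_i^h)\triangle g_h^n(E_i^h))=O(\Delta t\,h^d)$; you instead pull everything back to the domain, reduce $\I_2$ to the measure of the set where $\gamma_n(x)$ and $g_h^n(x)$ fall in different cells, and bound that by $\gamma_n^{-1}$ of a $\delta$-collar of the grid skeleton with $\delta\le L\Delta t\,h$. This is the ``dual'' geometric picture: a thin shell around the fixed skeleton rather than thin shells around each transported cube, and it dispenses with the Jacobian-determinant expansion altogether, at the price of needing $\Omega$ convex (so the segment argument works) and the collar bookkeeping. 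Both routes land on the same estimate $\sum_j\I_{2,j}\lesssim\|P_h\|_\infty\,\Delta t$ and both convert $\Delta t$ into $h$ by appealing to the CFL condition \eqref{eq:CFL}.

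Two small caveats, neither of which puts you below the paper's own level of rigor. First, your claim that $m_n=0$ implies $\gamma_n=g_h^n=\mathrm{id}$ is not quite right: vanishing of $v_n$ at the cell centers does not force $v_n\equiv 0$; the clean statement is simply $\I_2\le \mathrm{const}\cdot\|P_h\|_\infty\Delta t$, after which one uses (as the paper does, somewhat loosely) that the time step is taken of the order of $h$. Second, for $B$ to be independent of $h$ your expression needs $m_n^{-1}$ and $\|P_h\|_\infty$ uniformly controlled in $h$; the paper hides the same dependence inside ``$C^{\ast\ast}$ and the CFL condition forces $\Delta t$ to be of order $h$'', so it is worth stating explicitly, as you partly do, that one works under a fixed proportionality $\Delta t\le \mathrm{const}\cdot h$ and a discretization with $\|P_h\|_\infty\le\|\rho\|_\infty$.
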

\begin{proof}
\begin{myenumerate}
\item We begin by observing that
\begin{align*}
	\left\vert(\gamma_n\#\mu)(E_j^h)-(g_h^n\apppf\lambda_h)(E_j^h)\right\vert &\leq
	\left\vert(\gamma_n\#\mu)(E_j^h)-(\gamma_n\#\lambda_h)(E_j^h)\right\vert \\
	&\phantom{\leq}+\left\vert(\gamma_n\#\lambda_h)(E_j^h)-(g_h^n\apppf\lambda_h)(E_j^h)\right\vert,
\intertext{and further that, owing to Lemma \ref{lemma:pushfwd},}
	&= \left\vert(\gamma_n\#\mu)(E_j^h)-(\gamma_n\#\lambda_h)(E_j^h)\right\vert \\
	&\phantom{\leq}+\left\vert(\gamma_n\#\lambda_h)(E_j^h)-(g_h^n\#\lambda_h)(E_j^h)\right\vert.
\end{align*}
Set
$$ \I_{1,j}:=\left\vert(\gamma_n\#\mu)(E_j^h)-(\gamma_n\#\lambda_h)(E_j^h)\right\vert, \quad
	\I_{2,j}:=\left\vert(\gamma_n\#\lambda_h)(E_j^h)-(g_h^n\#\lambda_h)(E_j^h)\right\vert. $$
	
\item It is an easy consequence of Theorem \ref{theo:abscont} that $\gamma_n\#\mu$ is absolutely continuous with respect to $\Lebesgue^d$. Let $\gamma_n\#\rho\in L^1(\Omega)\cap L^\infty(\Omega)$ be its density, then
$$ \lint_E(\gamma_n\#\rho)(x)\,dx=\lint_{\gamma_n^{-1}(E)}\rho(x)\,dx=\lint_E\rho(\gamma_n^{-1}(y))
	\vert\det{(\gamma_n^{-1})'(y)}\vert\,dy, \qquad \forall\,E\in\B(\Omega), $$
where $(\gamma_n^{-1})'$ is the Jacobian matrix of $\gamma_n^{-1}$. From the arbitrariness of $E$ we get then
$$ (\gamma_n\#\rho)(x)=\rho(\gamma_n^{-1}(x))\vert\det{(\gamma_n^{-1})'(x)}\vert, \quad \text{for a.e.\ }x\in\Omega. $$	

\item We estimate now $\I_{1,j}$. Since $\gamma_n$ is a diffeomorphism, there exist two constants $\Lambda_+,\,\Lambda_->0$ such that
$$ \|(\gamma_n)'(x)\|\leq\Lambda_+, \quad  \|(\gamma_n^{-1})'(x)\|\leq\Lambda_-,
	\quad \text{for all\ } x\in\Omega, $$
whence
\begin{align*}
	\I_{1,j} &\leq \lint_{E_j^h}\left\vert\rho(\gamma_n^{-1}(x))-P_h(\gamma_n^{-1}(x))\right\vert
		\left\vert\det{(\gamma_n^{-1})'}(x)\right\vert\,dx \\
	&\leq \Lambda_-^d\lint_{\gamma_n^{-1}(E_j^h)}\vert\rho(x)-P_h(x)\vert
		\left\vert\det{\gamma_n'}(x)\right\vert\,dx\leq
		\Lambda_-^d\Lambda_+^d\lint_{\gamma_n^{-1}(E_j^h)}\vert\rho(x)-P_h(x)\vert\,dx.
\end{align*}
and finally
$$ \sum_{j=1}^{M_h}I_{1,j}\leq A\|\rho-P_h\|_1 $$
for $A=\Lambda_-^d\Lambda_+^d$.

\item The estimate on $\I_{2,j}$ requires a bit trickier reasoning as it involves the comparison between the push forward operated by the true motion mapping $\gamma_n$ and the one operated by its approximation $g^n_h$. By a procedure analogous to that used in the proof of Lemma \ref{lemma:pushfwd} to compute the density of $g_h^n\#\lambda_h$, it is straightforward to see that the density of $\gamma_n\#\lambda_h$ can be also written as
$$ (\gamma_n\#P_h)(x)=\sum_{j=1}^{M_h}\rho^n_{j,h}\chi_{\gamma_n(E_j^h)}(x)\vert\det{(\gamma_n^{-1})'(x)}\vert, $$
whence
\begin{align*}
	\I_{2,j} &\leq \sum_{i=1}^{M_h}\rho_{i,h}\lint_{E_j^h}\left\vert\chi_{\gamma_n(E_i^h)}(x)
		\vert\det{(\gamma_n^{-1})'(x)}\vert-\chi_{g^n_h(E_i^h)}(x)\right\vert\,dx \\
	&\leq \|P_h\|_\infty\sum_{i=1}^{M_h}\Biggl[\Lambda_-^d\Lebesgue^d(\gamma_n(E_i^h)\cap E_j^h\setminus g^n_h(E_i^h)) \\
	&\phantom{\leq}+\lint_{\gamma_n(E_i^h)\cap g^n_h(E_i^h)\cap E_j^h}\vert\det{(\gamma_n^{-1})'(x)}-1\vert\,dx
		+\Lebesgue^d(g^n_h(E_i^h)\cap E_j^h\setminus\gamma_n(E_i^h))\Biggr].
\end{align*}
Recall the formula $\det{(\gamma_n^{-1})'(x)}={(\det{\gamma_n'(y)})}^{-1}$ for $x=\gamma_n(y)$ and denote $I\in\R^{d\times d}$ the identity matrix. We have $\gamma_n'(y)=I+\Delta{t}v_n'(y)$, with furthermore $\|v_n'(y)\|\leq L$ all $y\in\Omega$ for a certain constant $L>0$ (due to the smoothness of $v_n$), whence $\vert\tr{v_n'(y)}\vert\leq L^d$ ($\tr{v_n'(y)}$ being the trace of $v_n'(y)$) all $y\in\Omega$ and finally
$$ \det{\gamma_n'(y)}=1+\Delta{t}\tr{v_n'(y)}+o(\Delta{t}) \qquad (\Delta{t}\to 0). $$
Thus we conclude, for $\Delta{t}\to 0$, $\det{\gamma_n'(y)}\sim 1+K\Delta{t}$ for a constant $K\in\R$. It follows ${(\det{\gamma_n'(y)})}^{-1}\sim 1-K\Delta{t}$, therefore, for $\Delta{t}$ sufficiently small, $\vert{(\det{\gamma_n'(y)})}^{-1}-1\vert\leq\vert K\vert\Delta{t}$, which allows to specialize the previous estimate of $\I_{2,j}$ as
\begin{align*}
	\I_{2,j}\leq\|P_h\|_\infty C^\ast
		\sum_{i=1}^{M_h}&\left[\Lebesgue^d(\gamma_n(E_i^h)\triangle g^n_h(E_i^h)\cap E_j^h)\right. \\
	&+\left.\Delta{t}\Lebesgue^d(\gamma_n(E_i^h)\cap g^n_h(E_i^h)\cap E_j^h)\right],
\end{align*}
where $C^\ast=\max{\{\Lambda_-^d,\,\vert K\vert,\,1\}}$ and $\triangle$ denotes the symmetric difference of two sets. Summing over $j$ further yields
$$ \sum_{j=1}^{M_h}\I_{2,j}\leq\|P_h\|_\infty C^\ast
	\sum_{i=1}^{M_h}\left[\Lebesgue^d(\gamma_n(E_i^h)\triangle g^n_h(E_i^h))+
		\Delta{t}\Lebesgue^d(\gamma_n(E_i^h)\cap g^n_h(E_i^h))\right]. $$
It is now simple to obtain
$$ \Lebesgue^d(\gamma_n(E_i^h)\cap g^n_h(E_i^h))\leq\Lebesgue^d(g^n_h(E_i^h))=h^d, $$
whereas in order to estimate the Lebesgue measure of $\gamma_n(E_i^h)\triangle g^n_h(E_i^h)$ we rely on the following argument. Given $x\in E_i^h$, the maximum distance between the points $\gamma_n(x)$ and $g^n_h(x)$ is estimated as $\vert\gamma_n(x)-g^n_h(x)\vert=\Delta{t}\vert v_n(x)-v_n(x_j)\vert\leq L\Delta{t}h$, therefore, by taking $x\in\partial E_i^h$, we see that the set $\gamma_n(E_i^h)\setminus g^n_h(E_i^h)$ is fully contained into the gap between the (hyper)cube $g^n_h(E_i^h)$ with edge  size $h$ and the concentric (hyper)cube with edge size $h+2L\Delta{t}h$. This entails, for $\Delta{t}$ small (say $\Delta{t}<1$):
$$ \Lebesgue^d(\gamma_n(E_i^h)\setminus g^n_h(E_i^h))\leq (h+2L\Delta{t}h)^d-h^d=O(\Delta{t}h^d). $$
Conversely, the set $g^n_h(E_i^h)\setminus\gamma_n(E_i^h)$ is fully contained into the gap between the (hyper)cube $g^n_h(E_i^h)$ with edge size $h$ and the concentric (hyper)cube with edge size $h-2L\Delta{t}h$ (assuming $\Delta{t}<(2L)^{-1}$ for consistency). Consequently:
$$ \Lebesgue^d(g^n_h(E_j^h)\setminus\gamma_n(E_j^h))\leq h^d-(h-2L\Delta{t}h)^d=O(\Delta{t}h^d), $$
whence finally $\Lebesgue^d(\gamma_n(E_j^h)\triangle g^n_h(E_j^h))=O(\Delta{t}h^d)$, which allows to complete the estimate of $\sum_j\I_{2,j}$ as (consider that $M_hh^d=\Lebesgue^d(\Omega)$)
$$ \sum_{j=1}^{M_h}\I_{2,j}\leq\|P_h\|_\infty C^{\ast\ast}\Delta{t}, $$
$C^{\ast\ast}$ being a new cumulative constant independent of either $h$, $\Delta{t}$ or $n$. Considering that condition \eqref{eq:CFL} forces the order of magnitude of the time step $\Delta{t}$ to be dictated by the order of the grid size $h$, the previous relation implies finally $\sum_jI_{2,j}\leq Bh$ for a suitable constant $B\geq 0$.

\item Collecting the results so far obtained we ultimately have
$$ \sum_{j=1}^{M_h}\left\vert(\gamma_n\#\mu)(E_j^h)-(g_h^n\apppf\lambda_h)(E_j^h)\right\vert\leq
	\sum_{j=1}^{M_h}(\I_{1,j}+\I_{2,j})\leq A\|\rho-P_h\|_1+Bh $$
and the claim of the theorem follows. \qedhere
\end{myenumerate}
\end{proof}

Theorem \ref{theo:onestepstab} applied to the measures $\{\mu_n\}_{n\geq 0}$ and $\{\lambda_h^n\}_{n\geq 0}$ gives the stability of the numerical scheme \eqref{eq:numscheme} with respect to the push forward \eqref{eq:pushfwd} at one time step:
$$ \sum_{j=1}^{M_h}\left\vert\mu_{n+1}(E_j^h)-\lambda_h^{n+1}(E_j^h)\right\vert\leq
	A\|\rho_n-P_h^n\|_1+Bh. $$
This amounts to controlling the total variation of the error $\mu_{n+1}-\lambda_h^{n+1}$ over the grid cells (left-hand side) by the $L^1$-distance of the densities of the same measures at the previous time step and the size of the grid (right-hand size). By slightly modifying the argument used in this proof we can also obtain a stability over several time steps.

\begin{theorem}[Multistep stability]	\label{theo:multistepstab}
Let $\gamma_n$ be like in Theorem \ref{theo:onestepstab} and assume that $v_n$ is uniformly bounded in time over $\Omega$. Then, for $h>0$ sufficiently small, there exist a constant $A_n\geq 0$ such that
$$ \max_{j=1,\,\dots,\,M_h}\left\vert\mu_n(E_j^h)-\lambda_h^n(E_j^h)\right\vert\leq
	\max_{j=1,\,\dots,\,M_h}\left\vert\mu_0(E_j^h)-\lambda_h^0(E_j^h)\right\vert+A_nh^d. $$
\end{theorem}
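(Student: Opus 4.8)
The plan is to reduce the statement to a one-step recursive inequality for the quantity $e_n:=\max_{j=1,\dots,M_h}\vert\mu_n(E_j^h)-\lambda_h^n(E_j^h)\vert$ and then to solve that recursion. Fix a cell $E_j^h$. Using $\mu_{n+1}=\gamma_n\#\mu_n$, $\lambda_h^{n+1}=g_h^n\apppf\lambda_h^n$, the identity $(g_h^n\apppf\lambda_h^n)(E_j^h)=(g_h^n\#\lambda_h^n)(E_j^h)$ from Lemma~\ref{lemma:pushfwd}, and the linearity of the push forward in the measure, I would write
\begin{equation*}
	\mu_{n+1}(E_j^h)-\lambda_h^{n+1}(E_j^h)=(\mu_n-\lambda_h^n)\bigl(\gamma_n^{-1}(E_j^h)\bigr)
	+\Bigl[(\gamma_n\#\lambda_h^n)(E_j^h)-(g_h^n\#\lambda_h^n)(E_j^h)\Bigr].
\end{equation*}
The bracketed term is exactly the quantity $\I_{2,j}$ estimated in the proof of Theorem~\ref{theo:onestepstab}; specialized to a \emph{single} cell, the CFL condition~\eqref{eq:CFL} together with the uniform-in-time bounds on $v_n$ limit to a fixed number the indices $i$ for which $\gamma_n(E_i^h)$ or $g_h^n(E_i^h)$ meets $E_j^h$, and each such $i$ contributes $O(\Delta t\,h^d)$ (the symmetric difference $\gamma_n(E_i^h)\triangle g_h^n(E_i^h)$ lying in a shell of width $O(\Delta t\,h)$ around a cube of edge $h$, as in that proof). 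Since~\eqref{eq:CFL} and the uniform bound on $\vert v_n\vert$ force $\Delta t=O(h)$, the bracketed term is $O(h^{d+1})$, with a constant depending on $n$ only through $\|P_h^n\|_\infty\le c^nB_0$.

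The core of the proof is the first term. Decomposing $\gamma_n^{-1}(E_j^h)$ over the grid, setting $\alpha_{jk}:=h^{-d}\Lebesgue^d(\gamma_n^{-1}(E_j^h)\cap E_k^h)\ge 0$, and recalling that $P_h^n$ is constant on each $E_k^h$, one obtains the elementary identity
\begin{equation*}
	(\mu_n-\lambda_h^n)\bigl(\gamma_n^{-1}(E_j^h)\bigr)=\sum_{k=1}^{M_h}\alpha_{jk}\,(\mu_n-\lambda_h^n)(E_k^h)
	+\sum_{k=1}^{M_h}\beta_{jk},\qquad
	\beta_{jk}:=\lint_{\gamma_n^{-1}(E_j^h)\cap E_k^h}\rho_n\,dx-\alpha_{jk}\lint_{E_k^h}\rho_n\,dx,
\end{equation*}
the $P_h^n$-contributions cancelling inside $\beta_{jk}$. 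Because $\gamma_n$ is a diffeomorphism differing from the identity by $O(\Delta t)$, its Jacobian equals $1+O(\Delta t)$, so $\sum_k\alpha_{jk}=h^{-d}\Lebesgue^d(\gamma_n^{-1}(E_j^h))\le 1+C\Delta t$; hence the first sum is bounded in modulus by $(1+C\Delta t)\,e_n$, while the crude bound $\vert\beta_{jk}\vert\le 2\alpha_{jk}h^d\|\rho_n\|_\infty$ (both integrals in $\beta_{jk}$ being over sets of measure $\alpha_{jk}h^d$) gives $\sum_k\vert\beta_{jk}\vert\le 2(1+C\Delta t)\|\rho_n\|_\infty h^d=O(h^d)$, with $\|\rho_n\|_\infty\le C^n\|\rho_0\|_\infty$ by Theorem~\ref{theo:wellpos}.

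Combining the two estimates and maximizing over $j$ gives $e_{n+1}\le(1+C\Delta t)\,e_n+K_nh^d$, with $C$ independent of $n$ and $K_n$ depending on $n$ only through $\|\rho_n\|_\infty$ and $\|P_h^n\|_\infty$. Iterating from $e_0$ yields
\begin{equation*}
	e_n\le(1+C\Delta t)^n e_0+\Bigl(\sum_{m=0}^{n-1}(1+C\Delta t)^{\,n-1-m}K_m\Bigr)h^d;
\end{equation*}
since $\Delta t=O(h)$ one has $(1+C\Delta t)^n=1+O(nh)$ for $n$ fixed and $h\to 0^+$, and the initial error itself satisfies $e_0=\max_j\vert\int_{E_j^h}(\rho_0-P_h^0)\,dx\vert\le h^d(\|\rho_0\|_\infty+B_0)=O(h^d)$, so the discrepancy $\bigl[(1+C\Delta t)^n-1\bigr]e_0=O(nh^{d+1})$ is absorbed into the $h^d$-term. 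This produces $e_n\le e_0+A_nh^d$ for a suitable $A_n\ge 0$ (depending on $n$, on $\Omega$, on the bounds of the $v_n$'s and $\gamma_n$'s, and on $\|\rho_0\|_\infty$, $B_0$) and all $h$ small enough.

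The main obstacle is the first term: the naive estimate $(\mu_n-\lambda_h^n)(\gamma_n^{-1}(E_j^h))\le c\,e_n$ — with $c>1$ the number of cells met by $\gamma_n^{-1}(E_j^h)$ — would amplify the error geometrically in $n$ and is useless; one must instead read $\{\alpha_{jk}\}_k$ as an \emph{almost convex} combination, whose total mass exceeds $1$ only by $O(\Delta t)$ precisely because $\gamma_n$ is a nearly volume-preserving diffeomorphism, so that the transport step is a contraction up to lower-order terms. This is where the hypotheses that $\gamma_n$ be a diffeomorphism and $v_n$ be uniformly bounded in time are genuinely used; notice that no regularity of $\rho_n$ beyond $L^\infty$ is needed, since the $\beta_{jk}$ are controlled crudely by $\|\rho_n\|_\infty h^d$.
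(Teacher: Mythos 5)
Your argument is correct, and its outer skeleton coincides with the paper's: the same splitting through the intermediate quantity $(\gamma_n\#\lambda_h^n)(E_j^h)$ (via Lemma \ref{lemma:pushfwd}), the same single-cell estimate of the $\gamma_n$-versus-$g_h^n$ term of order $\|P_h^n\|_\infty h^{d+1}$ using the CFL condition and the uniform velocity bound, and a one-step recursion iterated in $n$. Where you genuinely diverge is in the transported-error term $(\mu_n-\lambda_h^n)(\gamma_n^{-1}(E_j^h))$. The paper compares $\gamma_n^{-1}(E_j^h)$ with the \emph{same} cell $E_j^h$, writing $\vert(\mu_n-\lambda_h^n)(\gamma_n^{-1}(E_j^h))\vert\leq\vert\mu_n(E_j^h)-\lambda_h^n(E_j^h)\vert+\int_{\gamma_n^{-1}(E_j^h)\triangle E_j^h}\vert\rho_n-P_h^n\vert\,dx$ and using $\Delta t\,V\leq h$ to get $\Lebesgue^d(\gamma_n^{-1}(E_j^h)\triangle E_j^h)=O(h^d)$; this yields the non-amplifying recursion $e_{n+1}\leq e_n+a_nh^d$, which telescopes directly to the stated inequality with the initial error appearing with coefficient exactly $1$ and no further hypotheses. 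You instead expand $\gamma_n^{-1}(E_j^h)$ over all cells as an almost-convex combination, with weights $\alpha_{jk}$ whose total mass is $1+O(\Delta t)$ by the Jacobian of the diffeomorphism, obtaining $e_{n+1}\leq(1+C\Delta t)e_n+K_nh^d$; you then must neutralize the factor $(1+C\Delta t)^n$ by the extra observations that $e_0=O(h^d)$ (which quietly invokes the uniform bound $\|P_h^0\|_\infty\leq B_0$ on the discretized initial datum, consistent with the paper's standing assumptions since its own proof uses $\|P_h^n\|_\infty\leq c^nB_0$) and that $\Delta t$ is controlled, so the excess is absorbed into $A_nh^d$. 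The trade-off: the paper's symmetric-difference comparison is shorter and gives the claimed inequality verbatim, but hinges on the displacement being at most one cell width; your convex-combination argument is slightly longer and needs the absorption step, but it isolates the near volume preservation of $\gamma_n$ as the true source of non-amplification and would survive under a weaker CFL-type restriction in which $\gamma_n^{-1}(E_j^h)$ spreads over several cells. Your $K_n$ and the paper's $a_n$ carry the same $n$-dependence, through $C^n\|\rho_0\|_\infty$ and $c^nB_0$, and your identification of the pitfall (the naive bound $c\,e_n$ with $c>1$) is exactly the difficulty both arguments are designed to avoid.
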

\begin{proof}
We proceed inductively on $n$ and consider
\begin{align*}
	\left\vert\mu_{n+1}(E_j^h)-\lambda_h^{n+1}(E_j^h)\right\vert &\leq\left\vert(\gamma_n\#\mu_n)(E_j^h)
		-(\gamma_n\#\lambda_h^n)(E_j^h)\right\vert \\
	&\phantom{\leq}+\left\vert(\gamma_n\#\lambda_h^n)(E_j^h)-(g_h^n\#\lambda_h^n)(E_j^h)\right\vert.
\end{align*}
\begin{myenumerate}
\item The second term at the right-hand side is estimated like $\I_{2,j}$ in the proof of Theorem \ref{theo:onestepstab}. In particular, owing to condition \eqref{eq:CFL} along with the uniform boundedness of the velocity $v_n$ on $\Omega$, only a finite constant number of intersections $\gamma_n(E_i^h)\triangle g_h^n(E_i^h)\cap E_j^h$, $\gamma_n(E_i^h)\cap g_h^n(E_i^h)\cap E_j^h$ are nonempty when $i$ runs from $1$ to $M_h$, whence, for $h$ sufficiently small,
$$ \left\vert(\gamma_n\#\lambda_h^n)(E_j^h)-(g_h^n\#\lambda_h^n)(E_j^h)\right\vert\leq
	\|P_h^n\|_\infty C^\ast h^{d+1}, $$
where $C^\ast$ is a constant independent of both $h$ and $n$, and $\|P_h^n\|_\infty\leq c^nB_0$.

\item The first term at the right-hand side coincides with $\I_{1,j}$ in the proof of Theorem \ref{theo:onestepstab}, but we need here a different estimate for it. Specifically, we write
\begin{align*}
	\left\vert(\gamma_n\#\mu_n)(E_j^h)-(\gamma_n\#\lambda_h^n)(E_j^h)\right\vert &\leq
		\left\vert\mu_n(E_j^h)-\lambda_h^n(E_j^h)\right\vert \\
	&\phantom{\leq}+\lint_{\gamma_n^{-1}(E_j^h)\triangle E_j^h}\vert\rho_n(x)-P_h^n(x)\vert\,dx,
\end{align*}
then estimate
\begin{align*}
	\lint_{\gamma_n^{-1}(E_j^h)\triangle E_j^h}\vert\rho_n(x)-P_h^n(x)\vert\,dx &\leq
		(C^n\|\rho_0\|_\infty+c^nB_0)\Lebesgue^d(\gamma_n^{-1}(E_j^h)\triangle E_j^h).
\end{align*}
The maximum distance between a point $x\in\Omega$ and its image $\gamma_n(x)$ is
$$ \vert x-\gamma_n(x)\vert\leq\Delta{t}\vert v_n(x)\vert\leq \Delta{t}V, $$
where $V\geq\|v_n\|_\infty$ comes for the uniform in time boundedness of the velocity $v_n$ on $\Omega$. Therefore, if $\Delta{t}$ is chosen in such a way that $\Delta{t}V\leq h$ then condition \eqref{eq:CFL} is fulfilled and in addition, thinking now of $x\in\partial\gamma_n^{-1}(E_j^h)$, the inverse image $\gamma_n^{-1}(E_j^h)$ is fully contained into the (hyper)cube centered at $x_j$ (the center of $E_j^h)$ with edge size $h+2\Delta{t}V$, whence
$$ \Lebesgue^d(\gamma_n^{-1}(E_j^h)\setminus E_j^h)\leq (h+2\Delta{t}V)^d-h^d=O(h^d). $$
On the other hand, we have
$$ \Lebesgue^d(E_j^h\setminus\gamma_n^{-1}(E_j^h))\leq\Lebesgue^d(E_j^h)=h^d $$
so that finally $\Lebesgue^d(\gamma_n^{-1}(E_j^h)\triangle E_j^h)=O(h^d)$ and for $h$ small this term dominates over the previous one.

\item Owing to the above estimates, we can find a constant $a_n>0$, which essentially depends on $n$ through the powers of $C$ and $c$, such that
$$ \left\vert\mu_{n+1}(E_j^h)-\lambda_h^{n+1}(E_j^h)\right\vert\leq
	\left\vert\mu_n(E_j^h)-\lambda_h^n(E_j^h)\right\vert+a_nh^d, $$
hence taking the maximum over $j=1,\,\dots,\,M_h$ and then summing telescopically over $n$ we get the thesis with $A_n=\sum_{i=0}^{n-1}a_i$. \qedhere
\end{myenumerate}
\end{proof}

The quantities $\left\vert\mu_n(E_j^h)-\lambda_h^n(E_j^h)\right\vert$, $j=1,\,\dots,\,M_h$, can be interpreted as the \emph{localization error} of the measure $\mu_n$ produced by the measure $\lambda_h^n$ over the grid cells; in other words, they say how much the cell-by-cell distribution of the measure $\lambda_h^n$ is a reliable approximation of the distribution of $\mu_n$. We incidentally notice that the cell-by-cell distribution of $\lambda_h^n$ is actually the most visible outcome when looking at the result of a numerical simulation. Theorem \ref{theo:onestepstab} guarantees stability, in a single time step, of the total variation of the localization error, but requires for this the stronger information on the $L^1$-distance between the exact and the approximate densities. Theorem \ref{theo:multistepstab} yields instead the uniform stability of the localization error over several time steps with respect to the grid size, starting from the initial error produced by the discretization of $\mu_0$.

\subsection{Nonlinear fluxes}	\label{subsect:nonlinflux}
In this subsection we briefly comment on how the previous theory applies when the velocity field $v_n$ depends explicitly on the measure $\mu_n$, $v_n=v_n[\mu_n]$, which gives rise to a nonlinear flux $\mu_nv_n[\mu_n]$. As a matter of fact, this structure turns out to be quite relevant for a wide range of applications, including pedestrian flow that we will address later, where the dynamics is strongly influenced by the current configuration of the system.

Given the density $\rho_0\in L^1(\Omega)$, $\rho_0\geq 0$ a.e. in $\Omega$, of $\mu_0$ with respect to $\Lebesgue^d$, let us focus on the existence of the densities $\{\rho_n\}_{n>0}$ in case of motion mappings $\gamma_n=\gamma_n[\mu_n]$ explicitly depending on the $\mu_n$'s. The first step consists in assuming that, for a certain $n>0$, the measure $\mu_n$ is absolutely continuous with respect to $\Lebesgue^d$, with density $\rho_n\in L^1(\Omega)$, $\rho_n\geq 0$ a.e. in $\Omega$. Next one checks whether, in view of this hypothesis, property \eqref{ass:gamma-Cn} of Assumption \ref{ass:gamma} is actually satisfied. If this is the case, then Theorem \ref{theo:abscont} implies $\mu_{n+1}\ll\Lebesgue^d$ with density $\rho_{n+1}\in L^1(\Omega)$, $\rho_{n+1}\geq 0$ a.e. in $\Omega$, thus one concludes by induction that $\mu_n\ll\Lebesgue^d$ for each $n\geq 0$ as desired. In practice, this is nothing but the same reasoning scheme applied in Subsect. \ref{subsect:pushfwd_timedisc} to prove Theorem \ref{theo:abscont}, with the only difference that property \eqref{ass:gamma-Cn} of Assumption \ref{ass:gamma} on $\gamma_n$ is not assumed to hold \emph{a priori}, but is checked \emph{a posteriori} as a consequence of the inductive hypothesis $\mu_n\ll\Lebesgue^d$. Notice that uniqueness of the $\rho_n$'s along with $L^1$ and possibly $L^\infty$ estimates now follow straightforwardly, because the proof of Theorem \ref{theo:wellpos} does not rely on any specific structure of the motion mapping $\gamma_n$, except that it satisfies Eq. \eqref{eq:ass-gamma-Cn}.

As far as the spatial approximation of the $\mu_n$'s is concerned, one has to take into account that the approximate motion mapping $g^n_h$ now depends on $\lambda^n_h$ because so does the approximate velocity field $u^n_h$. Basically, one tries to mimic the true velocity $v_n[\mu_n]$ by $u^n_h[\lambda^n_h]$, setting
$$ u^n_{j,h}[\lambda^n_h]=v_n[\lambda^n_h](x_j) \qquad\text{or}\qquad
	u^n_{j,h}[\lambda^n_h]=\frac{1}{h^d}\lint_{E_j^h}v_n[\lambda^n_h](x)\,dx $$
and then reconstructing
$$ u^n_h[\lambda^n_h](x)=\sum_{j=1}^{M_h}u^n_{j,h}[\lambda^n_h]\chi_{E_j^h}(x). $$

By inspecting the proof of Theorems \ref{theo:onestepstab}, \ref{theo:multistepstab} we see that a relevant part of the error estimate involving $\I_2$ relies on the Lipschitz estimate of the velocity $v_n$ within each grid cell $E_j^h$:
$$ \vert\gamma_n(x)-g^n_h(x)\vert\leq L\Delta{t}h, \qquad \forall\,x\in E_j^h. $$
When the motion mappings $\gamma_n$, $g^n_h$ depend on the measures $\mu_n$, $\lambda^n_h$, some considerations on the localization error produced by the scheme in a single time step might still be developed if the velocity $v_n$ is such that
\begin{equation}
	\vert v_n[\mu_n](x)-v_n[\lambda^n_h](x_j)\vert=O\left(h+\|\rho_n-P^n_h\|_{L^1(E_j^h)}\right),
		\qquad \forall\,x\in E_j^h.
	\label{eq:Lip_nonlinflux}
\end{equation}
We refrain from going into this issue, we simply point out that, as we will see more in detail in the application to pedestrian flow (cf. Sect. \ref{subsect:numappr}), an estimate of this form is more easily proved when $\gamma_n,\,g^n_h$ depend on $\mu_n,\,\lambda^n_h$ through $\rho_n,\,P^n_h$, respectively, in a \emph{non local} way.

\section{Macroscopic modeling of pedestrian flow}	\label{sect:modeling}
In this section we address the application of pedestrian flow using the time-evolving measure framework previously developed. Pedestrians in motion within a given walking area are an essentially discrete system, in which each person plays the role of an isolated agent moving in interaction with other surrounding agents, usually with the aim of reaching a particular destination. As such, this system should be conceptually described at the microscopic scale, i.e., by a system of ordinary differential equations tracing the evolution in time of the position of each single pedestrian. The possible coupling of these ODEs should reflect the interactions among pedestrians, namely the influence that each one of them has on the motion of the others. Some microscopic models of pedestrian flow are currently available in the literature, generally regarding pedestrians as rigid particles, whose motion is formally regulated by classical Newtonian laws of point mechanics:
$$ \dot{v}_i(t)=F_i(t), $$
where $v_i$ is the velocity of the $i$-th pedestrian, $F_i$ an overall force acting on her/him, while the index $i$ runs from $1$ to the total number of pedestrian considered in the model. However, the force field $F_i$ usually invokes nonclassical dynamic concepts, resorting mainly to the ideas of \emph{preferred direction of motion} and of \emph{comfort/discomfort} due to the distance from or the proximity to other pedestrians. Therefore, the $F_i$'s have not to be understood strictly as mechanical actions exerted by pedestrians on each other. For instance, Helbing and coworkers \cite{PhysRevE.51.4282,HeMoFaBo} introduce the concept of \emph{social} (or \emph{behavioral}) \emph{force}, which measures the internal motivation of the individuals in performing certain movements. Specifically, in their model pedestrians are regarded as points, and two main factors contribute to the definition of the social force $F_i$ acting on the $i$-th individual:
\begin{myenumerate}
\item A relaxation toward a \emph{desired velocity}, i.e., the velocity that the $i$-th individual should possess in order to reach her/his destination as comfortably as possible.
\item A \emph{repulsive effect} from neighboring pedestrians located too close, i.e., within the so-called \emph{private sphere} of the $i$-th pedestrian, or from walls and borders found in the walking area.
\end{myenumerate}
Essentially the same guidelines underlie the microscopic model of pedestrian flow by Maury and Venel \cite{MaVe}. The only difference is that pedestrians are now regarded as rigid disks, hence the repulsive effect among them is meant here as a geometrical constraint to avoid that disks step over one another. An alternative formulation, still relying on a microscopic description of the system, is instead used by Hoogendoorn and Bovy \cite{HoBo,HoBo2}, who propose a theory of pedestrian behavior based on the concepts of walking task and walking cost. Basically, they assume that pedestrians are feedback-oriented controllers, who plan their movements on the basis of some predictions they make on the behavior of the other individuals. Predictions are dictated by a sort of cooperative or non-cooperative game theory; in either case, they are affected by a limited in time and space predictive ability of the walkers. Each pedestrian behaves so as to minimize her/his individual estimated walking cost, which is expressed by a suitable functional depending on the predicted positions of the other people.

A different approach to the description of the system uses instead partial differential equations and the theory of (possibly multidimensional) conservation laws. In this case, it is assumed that pedestrians moving within a given walking area have a continuous distribution in space, so that it makes sense to introduce their density $\rho=\rho(t,\,x)$ and to invoke some conservation principles, e.g. the conservation of mass and possibly also of linear momentum, in order to get an equation, or a system of equations, satisfied by $\rho$. Colombo and Rosini \cite{MR2158218} propose a one-dimensional model built on a Cauchy problem for the nonlinear hyperbolic conservation law
$$ \partial_t\rho+\partial_xf(\rho)=0, \qquad x\in\R,\ t>0, $$
which reminds of the Lighthill-Whitham-Richards (LWR) model of vehicular traffic (see Lighthill and Whitham \cite{MR0072606}, Richards \cite{MR0075522}). The main difference is that the density of pedestrians exhibits two characteristic maximal values $R,\,R^\star$, with $0<R<R^\star$, at both of which the flux $f$ vanishes. In normal situations $\rho$ ranges in the interval $[0,\,R]$, where the flux is nonnegative, either strictly concave or with at most one inflection point, and has precisely one local maximum point. When the density grows above its ``standard'' maximum value, i.e., for $\rho\in (R,\,R^\star]$, it is assumed that, unlike vehicular traffic, pedestrians can still move but feel overcompressed, hence their flux is less effective than before and they enter a \emph{panic} state. In this region, the function $f$ features a trend similar to that described for $\rho\in[0,\,R]$, but with a local maximum value strictly less than the previous one. As illustrated in \cite{MR2158218,CoRo}, this allows to define a concept of solution to the above conservation law in which non-classical shocks are admitted, i.e., shocks complying with the Rankine-Hugoniot condition but possibly violating entropy criteria. As a consequence, the classical maximum principle for nonlinear hyperbolic equations, stating that the solution $\rho(t,\,x)$ remains confined within the same lower and upper bounds of the initial datum for all $x\in\R$ and all $t>0$, no longer holds true and the model is able to describe the transition of pedestrians to panic even starting from an initial density entirely bounded below the standard maximum $R$. The resulting fundamental diagram, i.e., the mapping $\rho\mapsto f(\rho)$, agrees well with experimental observations reported by Helbing \emph{et al.} in \cite{HeJoAA}.

Bellomo and Dogb\'e \cite{MR2438218} refer instead to a two-dimensional setting, in which the walking area is represented by a bounded domain $\Omega\subset\R^2$ with possible inlet and outlet regions along the boundary $\partial\Omega$. The motion of pedestrians is described by a system of two partial differential equations invoking the conservation of mass and the balance of linear momentum:
$$
\begin{cases}
	\partial_t\rho+\nabla\cdot(\rho v)=0 \\
	\partial_tv+(v\cdot\nabla)v=F[\rho,\,v],
\end{cases}
$$
where $F$ is a material model for the acceleration of the individuals depending in general in a functional way on the density $\rho$ and the velocity $v$. The above equations are formally inspired by the classical fluid dynamics models of continuum mechanics, however the force (per unit mass) $F$ contains non-classical contributions accounting for:
\begin{myenumerate}
\item A relaxation toward a desired velocity, that makes pedestrians point in the direction of a certain target they want to reach.
\item A local crowding estimate, based on the pointwise values of $\nabla{\rho}$ possibly taken along the direction of the desired velocity, which might induce pedestrians to deviate from their preferred path in order to avoid areas of high density.
\item A pressure-like term, possibly regarded as a material quantity as in the celebrated Aw-Rascle model of vehicular traffic \cite{MR1750085}, which models the reaction of pedestrians to the presence of other individuals in the surrounding environment.
\end{myenumerate}
Additional topics, like e.g. the existence of a limited visibility zone for each pedestrian when trying to evaluate the minimal crowding direction, are also discussed. In particular, special attention is paid to the characterization of the panic state and to the transition to it from regular conditions: The Authors of \cite{MR2438218} suggest that pedestrians entering a panic state tend to follow chaotically other individuals, dropping any specific target, and therefore are mostly attracted toward areas of high density rather than seeking the less congested paths.

As anticipated in the Introduction, the usually large amount of coupled ODEs to be handled simultaneously required by microscopic models is a drawback if not from the computational point of view, thanks to the increasing power of modern calculators, certainly for analytical purposes. Among others, we like to mention here the chance to recover a global overview of the system from the knowledge of its microscopic state, possibly in connection with control and optimization issues. Macroscopic models are more suited to this, but those currently available in the literature have to face several other difficulties due to their intrinsic hyperbolic nature. First of all, the most natural settings for pedestrian flow problems are two-dimensional: One-dimensional models are essentially explorative, but they are unlikely to provide effective mathematical tools to deal with real applications. However, it is well known that the theory of multi-dimensional systems of nonlinear hyperbolic equations is much more complicated under both the analytical and the numerical point of view, therefore a sound mathematical mastery of such models may hardly be achieved. Secondly, the imposition of boundary conditions may be tricky in hyperbolic models, because on the one hand one is forcedly driven by the characteristic velocities in defining the inflow and the outflow portion of the boundary, while on the other hand it must be guaranteed that pedestrians do not enter or exit the domain from any point of the boundary other than the prescribed inlet and outlet regions. This issue gets even more complicated in presence of \emph{obstacles}, which have to be understood as internal boundaries to the walking area. We notice, however, that the most interesting problems for applications generally do not concern pedestrian motion in free spaces, but precisely in areas scattered with obstacles (e.g., pillars, bottlenecks, narrow passages, see Helbing \emph{et al.} \cite{HeMoFaBo}), sometimes used to force the flow of crowds in specific directions.

A time-evolving measures approach to the modeling of pedestrian flow may help overcome some of the difficulties just outlined. If one understands the mapping $\mu_n$ as a mathematical tool to evaluate the degree of \emph{space occupancy} by pedestrians, then it is perfectly natural to address the problem from a macroscopic, even Eulerian, point of view in spite of the intrinsic Lagrangian granularity of the system: Given a certain walking area $\Omega\subseteq\R^2$, the number $\mu_n(E)$ is simply a \emph{measure} of the crowding of the subset $E\subseteq\Omega$ at time $n$. In addition, as shown by the discussion of Sect. \ref{sect:theory}, there are basically no differences in the one- or two-dimensional theory, therefore one can immediately tackle realistic problems without the need for conceiving preliminary one-dimensional approximations of them. Finally, the imposition of at least the most common no-flux boundary condition, including hence the treatment of internal obstacles to the walking area, is quite straightforward, for in principle it simply requires to guarantee that $\supp{\mu_n}\subseteq\Omega$ for all $n$. Notice that if $O\subseteq\Omega$ is, e.g., an obstacle, then it might be sufficient to impose $\mu_n(O)=0$, which translates the idea that pedestrians cannot occupy the area covered by $O$. Then any other subset $E$ of $\Omega$ is automatically measured by taking the presence of the obstacle into account, indeed, assuming conveniently that $\mu_n$ is complete, it results $\mu_n(E\cap O)=0$, thus $\mu_n(E)=\mu_n(E\setminus O)$.

\subsection{Nonlocal flux}
The mathematical structure depicted by Eqs. \eqref{eq:pushfwd}, \eqref{eq:gamman} requires as major modeling task to devise the velocity $v_n$ at the $n$-th time step as a function of $x\in\Omega$ and possibly also of the current distribution $\mu_n$ of pedestrians. Following some ideas proposed by Maury and Venel \cite{MaVe} in the frame of microscopic models of crowd motion, we distinguish two main contributions to the overall velocity of pedestrians:
$$ v_n[\mu_n](x)=v_d(x)+\nu_n[\mu_n](x). $$

The \emph{desired velocity} $v_d=v_d(x)$ is the velocity a pedestrian would have in the absence of other surrounding pedestrians. This component of the total velocity describes the preferred direction of motion toward specific targets, possibly taking into account the presence of intermediate obstacles within the domain $\Omega$. Therefore, it is not affected by the actual crowding of the environments, but it specifically depends on the geometry of the walking area (in this sense, it is a sort of \emph{field velocity}).

\begin{figure}[t]
\centering
\includegraphics{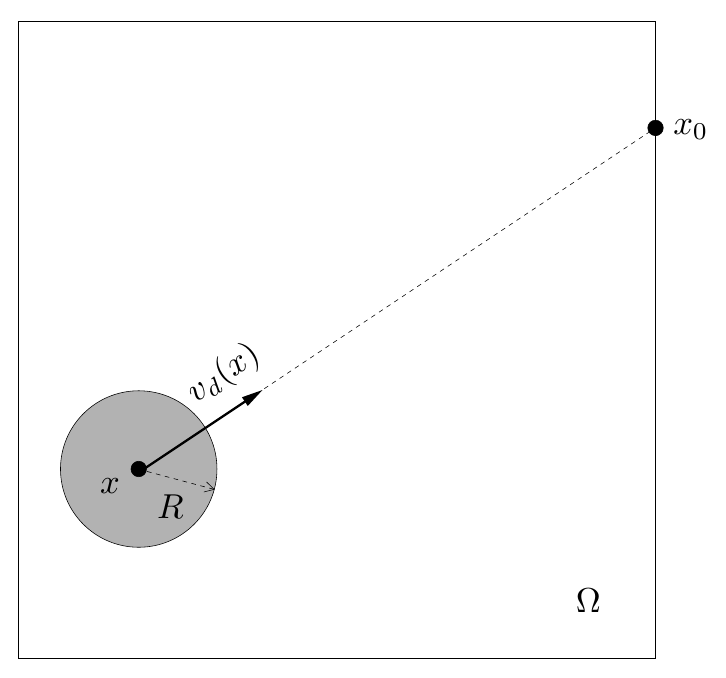}
\caption{The desired velocity $v_d$ toward the target $x_0$ and the interaction neighborhood $B_R(x)$ for a pedestrian located in $x\in\Omega$.}
\label{fig:domain}
\end{figure}

Assume the target of pedestrians is a certain location $x_0\in\bar\Omega$ representing, e.g., an aggregation point inside the walking area or a door along its boundary (as illustrated in Fig. \ref{fig:domain}). The simplest case is when $\Omega$ is star-shaped with respect to $x_0$, so that from any point $x\in\Omega$ there exists a straight path to $x_0$. Then the desired velocity at $x$ is readily defined as a vector pointing to $x_0$, namely
\begin{equation}
	v_d(x)=p_d(x)\frac{x_0-x}{\vert x_0-x\vert},
	\label{eq:vd}
\end{equation}
where $p_d:\Omega\to\R_+$ is a scalar nonnegative function representing the magnitude of $v_d$ (in other words, the \emph{desired speed} of pedestrians).

When the domain $\Omega$ is scattered with obstacles, one has to take into account that some points $x\in\Omega$ cannot be directly connected to $x_0$ by straight paths. For such points Eq. \eqref{eq:vd} must be duly modified, for instance by identifying several intermediate targets to be preliminarily reached, so as to bypass obstacles before pointing to $x_0$ (see Fig. \ref{fig:obstacle}). The definition of these intermediate targets is clearly not unique, and might be related to some strategy of walking optimization, for instance the minimization of the total length of the path from $x$ to $x_0$. The interested reader is referred to Maury and Venel \cite{MaVe} for further details. In this early application to pedestrian flow we refrain from going too much into this issue, deferring to a forthcoming work a deeper investigation of this modeling aspect.

\begin{figure}[t]
\centering
\includegraphics{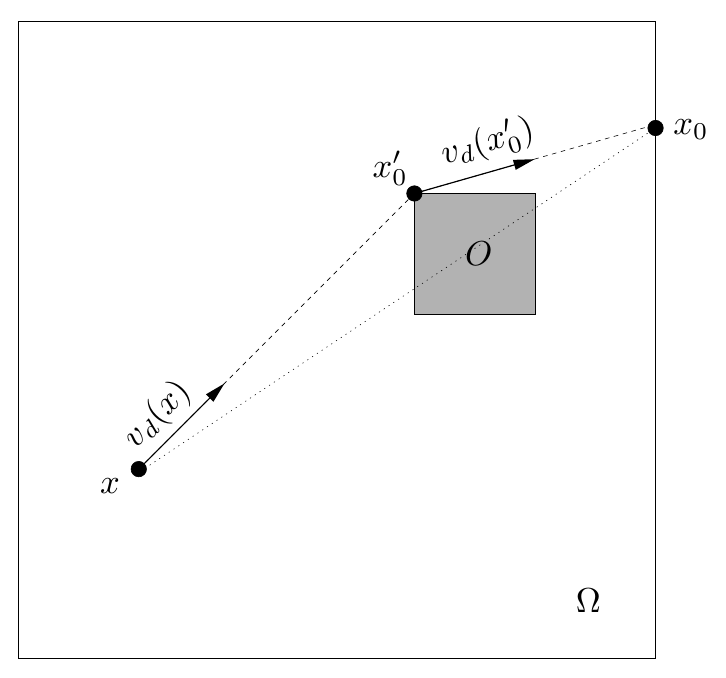}
\caption{A point $x\in\Omega$ that cannot be directly linked to the target $x_0\in\partial\Omega$ due to the presence of an intermediate obstacle $O\subset\Omega$. In this case, an intermediate target $x_0'$, coinciding with a corner of $O$, is identified, whence the actual target is then accessible. The desired velocity at $x$ points to $x_0'$, while the desired velocity at $x_0'$ finally points to $x_0$.}
\label{fig:obstacle}
\end{figure}

The \emph{interaction velocity} $\nu_n=\nu_n[\mu_n](x)$ expresses the deviation of pedestrians from their preferred path due to the presence of other surrounding pedestrians. In particular, we assume that interactions among pedestrians are effective only within a finite distance $R>0$, defining a \emph{neighborhood of interaction} $B_R(x)$ for each point $x\in\Omega$:
$$ B_R(x)=\{y\in\Omega\,:\,\|y-x\|\leq R\}. $$
The specific shape of this neighborhood depends on the $\R^2$-metric used in the above definition. If, for instance, $\|\cdot\|$ is the Euclidean norm then $B_R(x)$ is the closed ball centered at $x$ with radius $R$ (see Fig. \ref{fig:domain}). Instead, by taking $\|\cdot\|$ to be the $\infty$-norm the neighborhood of interaction results in a square centered at $x$ with side $2R$.

Resorting to the idea of nonlocal interactions among agents proposed by Canuto \emph{et al.} \cite{CaFaTi} for \emph{rendez-vous} problems, we assume that pedestrians interact in a nonlocal way with all subjects comprised within their neighborhood of interaction. The main difference with respect to the above-cited \emph{rendez-vous} is that in this case, at least under normal, i.e., no panic conditions, instead of aiming at aggregating they carefully try to bypass highly congested surrounding areas. We propose two different ways in which this behavior can be modeled, both based on the detection, by a weighted average, of a suitable point $x^\star\in B_R(x)$. In the first case, $x^\star$ is the center of mass of pedestrians in the neighborhood $B_R(x)$:
$$ x^\star=\frac{1}{\mu_n(B_R(x))}\lint_\Omega y\chi_{B_R(x)}(y)\,d\mu_n(y), $$
which can be thought of as an indicator of the area of highest crowd concentration within $B_R(x)$. Pedestrians trying to steer clear of this zone are driven away from $x^\star$, hence their interaction velocity can be given the form:
\begin{align}
	\nu_n[\mu_n](x) &= p_\nu[\mu_n](x)(x-x^\star) \notag \\
	& = p_\nu[\mu_n](x)\frac{1}{\mu_n(B_R(x))}\lint_\Omega(x-y)\chi_{B_R(x)}(y)\,d\mu_n(y),
	\label{eq:nu_G}
\end{align}
$p_\nu[\mu_n]:\Omega\to\R_+$ being a nonnegative function related to the intensity of the interaction. The superposition of $v_d(x)$ and $\nu_n[\mu_n](x)$ may finally lead pedestrians to deviate from their preferred path in order to avoid local jams (see Fig. \ref{fig:superpos_vel}, left).

In the second case, $x^\star$ is instead a point of low crowding within the neighborhood of interaction, which can be computed as a sort of ``inverse center of mass''. In practice, it should average out the neighboring locations $y\in B_R(x)$, emphasizing those with low pedestrian density $\rho_n(y)$:
$$ x^\star=\frac{\lint_\Omega y\chi_{B_R(x)}(y)(\omega\circ\rho_n)(y)\,dy}
	{\lint_\Omega\chi_{B_R(x)}(y)(\omega\circ\rho_n)(y)\,dy}, $$
where $\omega:\R_+\to\R_+$ is a nonnegative nonincreasing weight function. Once this point has been defined, pedestrians locally deviate toward it, hence we put
\begin{align}
	\nu_n[\mu_n](x) &= p_\nu[\mu_n](x)(x^\star-x)	\notag \\
	&= p_\nu[\mu_n](x)\frac{\lint_\Omega (y-x)\chi_{B_R(x)}(y)(\omega\circ\rho_n)(y)\,dy}
		{\lint_\Omega\chi_{B_R(x)}(y)(\omega\circ\rho_n)(y)\,dy}.
	\label{eq:nu}
\end{align}
Notice that, unlike Eq. \eqref{eq:nu_G}, in Eq. \eqref{eq:nu} the dependence of $\nu_n$ on $\mu_n$ is achieved through $\rho_n$, which explicitly requires that a density for the measure $\mu_n$ be defined with respect to $\Lebesgue^2$ for all $n>0$. Again, the superposition of $v_d(x)$ and $\nu_n[\mu_n](x)$ make pedestrians modify their direction of motion looking for uncrowded surrounding areas (see Fig. \ref{fig:superpos_vel}, right).

\begin{figure}[t]
\centering
\includegraphics{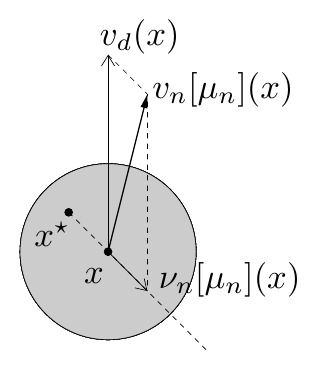} \hspace{3cm} \includegraphics{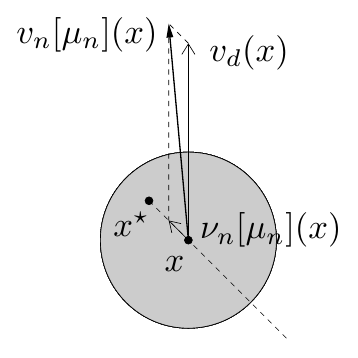}
\caption{Construction of the velocity $v_n[\mu_n](x)$ when $x^\star$ is the center of mass of pedestrians (left) or a point of low crowding (right) in $B_R(x)$.}
\label{fig:superpos_vel}
\end{figure}

In the sequel we will concentrate on the form of $\nu_n[\mu_n]$ provided by Eq. \eqref{eq:nu}, the case of Eq. \eqref{eq:nu_G} requiring, at least for $\mu_n\ll\Lebesgue^2$, essentially the same calculations up to the use of the boundedness of the density $\rho_n$ claimed by Theorem \ref{theo:wellpos}. In more detail, the case that we will be able to analyze theoretically is when
$$ p_d(x)=\alpha\vert x_0-x\vert,\qquad p_\nu[\mu_n](x)=\beta\lint_\Omega\chi_{B_R(x)}(y)(\omega\circ\rho_n)(y)\,dy, $$
for $\alpha,\,\beta\geq 0$, which gives rise to
\begin{equation}
	v_n[\mu_n](x)=\alpha(x_0-x)+\beta\lint_\Omega(y-x)\chi_{B_R(x)}(y)(\omega\circ\rho_n)(y)\,dy.
	\label{eq:vn}
\end{equation}
As a consequence, the motion mapping $\gamma_n$ depends on the measure $\mu_n$, therefore the theory presented in Sect. \ref{sect:theory} applies to the present model provided the discussion made in Subsect. \ref{subsect:nonlinflux} holds for the velocity $v_n[\mu_n]$ resulting from Eq. \eqref{eq:vn}.

Let us prove that $\gamma_n$ complies with property \eqref{ass:gamma-Cn} of Assumption \ref{ass:gamma} by showing that $v_n[\mu_n]$ is Lipschitz continuous in $\Omega$. This is self-evident for the desired velocity $v_d(x)=\alpha(x_0-x)$, while, as far as the interaction velocity $\nu_n[\mu_n]$ is concerned, we compute, for $x_1,\,x_2\in\Omega$,
\begin{align*}
	\vert\nu_n[\mu_n](x_2)-\nu_n[\mu_n](x_1)\vert &=
		\beta\left\vert\lint_\Omega y\left(\chi_{B_R(x_1)}(y)-\chi_{B_R(x_2)}(y)\right)(\omega\circ\rho_n)(y)\,dy
			\right. \\
	&\phantom{=}+\left.\lint_\Omega\left(x_2\chi_{B_R(x_2)}(y)-x_1\chi_{B_R(x_1)}(y)\right)(\omega\circ\rho_n)(y)\,dy
		\right\vert \\
	&\leq\beta\omega(0)\left[C_\Omega\lint_\Omega\vert\chi_{B_R(x_1)}(y)-\chi_{B_R(x_2)}(y)\vert\,dy\right. \\
		&\phantom{=}+\left.\lint_\Omega\vert x_2\chi_{B_R(x_2)}(y)-x_1\chi_{B_R(x_1)}(y)\vert\,dy\right],
\end{align*}
where $C_\Omega>0$ is a constant related to the boundedness of $\Omega$, such that $\vert x\vert\leq C_\Omega$ for all $x\in\Omega$, and where we have used $\omega(s)\leq\omega(0)$ for all $s\geq 0$, together with the inductive hypothesis $\rho_n\geq 0$ a.e. in $\Omega$, the weight $\omega$ being nonincreasing on $[0,\,+\infty)$. Notice that for any two sets $A,\,B$ and any two real numbers $a,\,b$ it results
$$	\vert a\chi_A(x)-b\chi_B(x)\vert=
		\begin{cases}
			\vert a\vert & \text{if\ } x\in A\setminus B \\
			\vert b\vert & \text{if\ } x\in B\setminus A \\
			\vert a-b\vert & \text{if\ } x\in A\cap B \\
			0 & \text{otherwise},
		\end{cases} $$
thus in particular, when $a=b=1$, $\vert\chi_A(x)-\chi_B(x)\vert=\chi_{A\triangle B}(x)$, where $A\triangle B=(A\setminus B)\cup (B\setminus A)$ is the symmetric difference of the sets $A$ and $B$. Therefore we can carry the previous computation on as
\begin{align*}
	\vert\nu_n[\mu_n](x_2)-\nu_n[\mu_n](x_1)\vert &\leq
		\beta\omega(0)[C_\Omega\Lebesgue^2(B_R(x_1)\triangle B_R(x_2))+\vert x_2\vert
		\Lebesgue^2(B_R(x_2)\setminus B_R(x_1)) \\
	&\phantom{\leq}+\vert x_1\vert\Lebesgue^2(B_R(x_1)\setminus B_R(x_2))+
		\vert x_2-x_1\vert\Lebesgue^2(B_R(x_1)\cap B_R(x_2))].
\end{align*}
Assuming that the neighborhoods of interaction are closed balls with respect to the Euclidean metric in $\R^2$ and invoking the invariance of the Lebesgue measure under translations, it is basically a plane geometry task to see that
$$ \Lebesgue^2(B_R(x_1)\triangle B_R(x_2))\ \text{is\ }
	\begin{cases}
		=2\pi R^2 & \text{if\ } \vert x_2-x_1\vert\geq 2R \\
		\leq 4R\vert x_2-x_1\vert & \text{if\ } \vert x_2-x_1\vert<2R,
	\end{cases} $$
thus globally
\begin{equation}
	\Lebesgue^2(B_R(x_1)\triangle B_R(x_2))\leq 4R\vert x_2-x_1\vert.
	\label{eq:L2Lip}
\end{equation}
From this we deduce immediately
$$ \Lebesgue^2(B_R(x_1)\setminus B_R(x_2))=\Lebesgue^2(B_R(x_2)\setminus B_R(x_1))=\frac{1}{2}\Lebesgue^2(B_R(x_1)\triangle B_R(x_2))
	\leq 2R\vert x_2-x_1\vert, $$
whence finally, using further $\Lebesgue^2(B_R(x_1)\cap B_R(x_2))\leq\pi R^2$,
$$ \vert\nu_n[\mu_n](x_2)-\nu_n[\mu_n](x_1)\vert\leq\beta\omega(0)R(8C_\Omega+\pi R)\vert x_2-x_1\vert $$
which yields the Lipschitz continuity of the interaction velocity of pedestrians. Analogous calculations can be repeated for square-shaped neighborhoods of interaction.

Owing to the above results, we conclude that $v_n[\mu_n]$ is Lipschitz continuous in $\Omega$ as it is the algebraic sum of two Lipschitz continuous functions. The reasonings of Subsect. \ref{subsect:pushfwd_timedisc} entail then that $\gamma_n$ complies with Eq. \eqref{eq:ass-gamma-Cn}, hence in view of Theorems \ref{theo:abscont}, \ref{theo:wellpos} we have proved:
\begin{proposition}	\label{prop:mod-abscont}
Let $\mu_0\ll\Lebesgue^2$ with density $\rho_0\in L^1(\Omega)\cap L^\infty(\Omega)$, $\rho_0\geq 0$ a.e. in $\Omega$, and let the velocity field \eqref{eq:vn} be given. Then:
\begin{myenumerate} 
\item $\mu_n\ll\Lebesgue^2$ for all $n>0$.
\item There exists a unique sequence $\{\rho_n\}_{n>0}\subset L^1(\Omega)\cap L^\infty(\Omega)$, $\rho_n\geq 0$ a.e. in $\Omega$, such that $d\mu_n=\rho_n\,dx$ each $n>0$, solving Eq. \eqref{eq:pushfwd} (or, equivalently, Eq. \eqref{eq:pushfwd_dens}) with $\rho_0$ as initial datum.
\end{myenumerate}
\end{proposition}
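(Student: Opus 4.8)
The plan is to run an induction on $n$ following the scheme sketched in Subsect.~\ref{subsect:nonlinflux} for measure-dependent motion mappings: since $\gamma_n$ depends on $\mu_n$, one cannot invoke property \eqref{ass:gamma-Cn} of Assumption \ref{ass:gamma} \emph{a priori}, so it is verified \emph{a posteriori} at each step from the inductive hypothesis. The base case is exactly the given datum: $\mu_0\ll\Lebesgue^2$ with density $\rho_0\in L^1(\Omega)\cap L^\infty(\Omega)$, $\rho_0\geq 0$ a.e.\ in $\Omega$.

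For the inductive step, assume $\mu_n\ll\Lebesgue^2$ with density $\rho_n\in L^1(\Omega)\cap L^\infty(\Omega)$, $\rho_n\geq 0$ a.e.\ in $\Omega$. First I would observe that, because $\rho_n\geq 0$ a.e.\ and $\omega$ is nonnegative and nonincreasing, one has $0\leq(\omega\circ\rho_n)(y)\leq\omega(0)$, so the integrals in \eqref{eq:vn} are finite and $v_n[\mu_n]$ is well defined; it is in fact continuous, hence Borel, so requirement (1) of Assumption \ref{ass:gamma} is met as well. Next, the estimates carried out right before the statement — which rest on the plane-geometry bound \eqref{eq:L2Lip} for the symmetric difference of two balls — show that $v_n[\mu_n]$ is Lipschitz on $\Omega$ with constant $L=\alpha+\beta\omega(0)R(8C_\Omega+\pi R)$; the crucial point is that this $L$ is \emph{independent of $n$}, since $\rho_n$ enters only through the harmless bound $\omega(0)$. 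Fixing the time step once and for all so small that $L\Delta t<1$, the argument of Subsect.~\ref{subsect:pushfwd_timedisc} yields the inverse Lipschitz estimate \eqref{eq:gamma_invLip} for $\gamma_n(x)=x+v_n[\mu_n](x)\Delta t$ with $\Lambda=(1-L\Delta t)^{-1}$, hence property \eqref{ass:gamma-Cn} holds with the fixed constant $C=\Lambda^2=(1-L\Delta t)^{-2}$.

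Once \eqref{ass:gamma-Cn} is in hand, a single step of the induction in the proof of Theorem \ref{theo:abscont} gives $\mu_{n+1}=\gamma_n\#\mu_n\ll\Lebesgue^2$, and Theorem \ref{theo:wellpos} then takes over: its Radon--Nikodym density $\rho_{n+1}$ is the unique element of $L^1(\Omega)$ solving \eqref{eq:pushfwd_dens} at step $n+1$, it is nonnegative a.e., it satisfies $\|\rho_{n+1}\|_1=\|\rho_n\|_1$, and, $\rho_n$ being bounded, $\rho_{n+1}\in L^\infty(\Omega)$ with $\|\rho_{n+1}\|_\infty\leq C\|\rho_n\|_\infty$. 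This is precisely the inductive hypothesis at level $n+1$, which closes the induction and yields both (i) and (ii).

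I expect the only genuinely delicate ingredient to be the uniformity in $n$ of the Lipschitz constant of $v_n[\mu_n]$: it is what permits a single choice of $\Delta t$ to make every $\gamma_n$ inverse-Lipschitz, and thus to satisfy \eqref{ass:gamma-Cn} with a constant $C$ that does not depend on the step. Everything else is bookkeeping assembly of results already established — Theorems \ref{theo:abscont} and \ref{theo:wellpos} together with the \emph{a posteriori} verification scheme of Subsect.~\ref{subsect:nonlinflux} — around that estimate.
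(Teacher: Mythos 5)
Your proposal is correct and coincides with the paper's own argument: the paper likewise verifies property \eqref{ass:gamma-Cn} \emph{a posteriori} within the induction by showing, via the bound $\omega\circ\rho_n\leq\omega(0)$ and the estimate \eqref{eq:L2Lip}, that $v_n[\mu_n]$ is Lipschitz on $\Omega$ with the $n$-independent constant $\alpha+\beta\omega(0)R(8C_\Omega+\pi R)$, and then invokes the inverse-Lipschitz reasoning of Subsect.~\ref{subsect:pushfwd_timedisc} together with Theorems \ref{theo:abscont} and \ref{theo:wellpos}. Your explicit remark on the uniformity in $n$ of the Lipschitz constant, and hence on a single admissible choice of $\Delta t$, merely makes precise what the paper leaves implicit.
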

We can therefore speak of \emph{density of pedestrians} (in the sense of Radon-Nikodym theorem), and recover the measure $\mu_n$ as
$$ \mu_n(E)=\lint_E\rho_n(x)\,dx, \qquad \forall\,E\in\B(\Omega) $$
each $n\geq 0$. Remember that the $\mu_n$-measure of the subsets of $\Omega$, more than the pointwise values of the density $\rho_n$, is in principle the actual meaningful macroscopic information describing the space occupancy by the crowd.

\begin{figure}[t]
\centering
\includegraphics{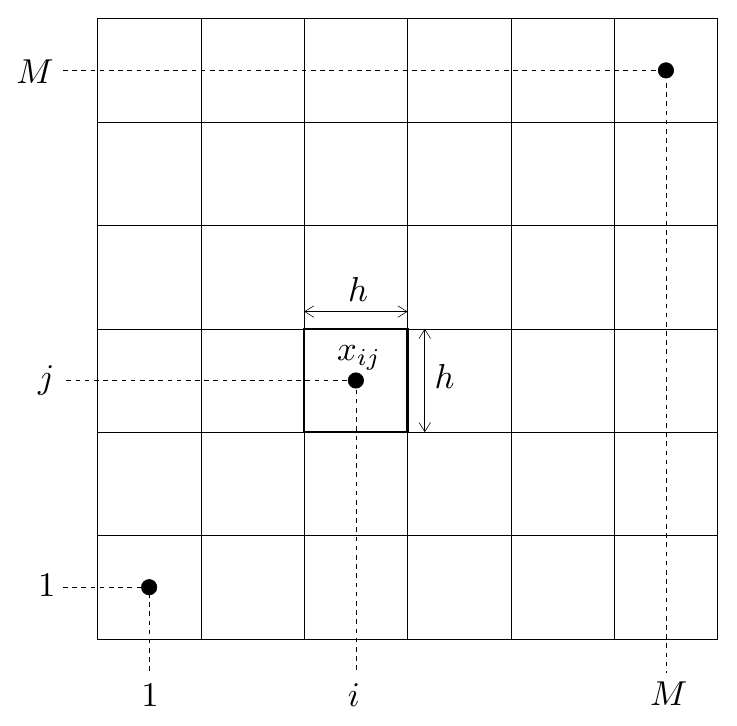}
\caption{The grid cell $E_{ij}$ with side $h$ and its center $x_{ij}$.}
\label{fig:grid}
\end{figure}

\subsection{Numerical approximation}	\label{subsect:numappr}
Spatial approximation of Problem \eqref{eq:problem} can be carried out by fixing preliminarily the walking area to be, for instance, the square $\Omega=[0,\,1]\times[0,\,1]\subset\R^2$. Let us partition $\Omega$ by a uniform orthogonal grid, made of square elements of side $h$, that we denote by $\{E_{ij}\}_{i,\,j=1}^M$ (from now on we drop the index $h$ for notational convenience). In this specific two-dimensional application, it is useful to identify each element of the grid by a couple of indices $i,\,j$, which locate 
the center $x_{ij}$ of the cell as
$$ x_{ij}=\left(\frac{2i-1}{2}h,\,\frac{2j-1}{2}h\right), \qquad i,\,j=1,\,\dots,\,M, $$
see Fig. \ref{fig:grid}, and to deal consequently with a double-indexed numerical density $P^n$:
$$ P^n(x)=\sum_{i,\,j=1}^M\rho^n_{ij}\chi_{E_{ij}}(x), \qquad n\geq 0. $$

In particular, the numerical scheme \eqref{eq:numscheme} rewrites as
\begin{equation}
	\rho^{n+1}_{ij}=\frac{1}{h^2}\sum_{l,\,m=1}^M\rho^n_{lm}\Lebesgue^2(E_{ij}\cap g^n[\lambda^n](E_{lm})), \qquad
		i,\,j=1,\,\dots,\,M,\ n\geq 0,
	\label{eq:numscheme_ped}
\end{equation}
where we have used the invariance of $\Lebesgue^2$ under the (piecewise) translation $g^n[\lambda^n]$ to get
$$ \Lebesgue^2((g^n[\lambda^n])^{-1}(E_{ij})\cap E_{lm})=\Lebesgue^2(E_{ij}\cap g^n[\lambda^n](E_{lm})). $$
The dependence of the approximate motion mapping $g^n$ on the measure $d\lambda^n=P^n\,dx$ amounts naturally to a dependence of the measures $\Lebesgue^2(E_{ij}\cap g^n[\lambda^n](E_{lm}))$ on the coefficients $\{\rho^n_{ij}\}_{i,\,j=1}^M$ of $P^n$. Notice however that the form \eqref{eq:numscheme_ped} of the scheme is more convenient for implementation purposes than the generic form \eqref{eq:numscheme}, as it does not require to determine the inverse images of the grid cells with respect to the mapping $g^n$. Rather, we incidentally notice that their images are straightforwardly obtained by translation as $g^n[\lambda^n](E_{lm})=E_{lm}+u^n_{lm}[\lambda^n]\Delta{t}$.

According to the discussion of Subsect. \ref{subsect:nonlinflux}, $g^n$ complies with property \eqref{ass:gn-CFL} of Assumption \ref{ass:CFL} provided at each time step the CFL condition \eqref{eq:CFL} is satisfied. In more detail, for every fixed pair of indices $(i,\,j)$ the intersection $E_{ij}\cap g^n[\lambda^n](E_{lm})$ is nonempty for at most nine adjacent pairs of indices $(l,\,m)$, namely
$$ (l,\,m)=(i,\,j),\ (i\pm 1,\,j),\ (i,\,j\pm 1),\ (i\pm 1,\,j\pm 1), $$
so that, denoting by $U_1,\,U_2$ the horizontal and vertical component, respectively, of $u^n_{lm}[\lambda^n]$, the coefficients $\Lebesgue^2(E_{ij}\cap g^n[\lambda^n](E_{lm}))$ of the above scheme can be duly computed as (cf. Fig. \ref{fig:intersect})
\begin{align*}
	\Lebesgue^2(E_{ij}\cap g^n[\lambda^n](E_{lm})) &= [U_1^+\Delta{t}\delta_{l,i-1}+(h-\vert U_1\vert\Delta{t})\delta_{li}
		+U_1^-\Delta{t}\delta_{l,i+1}] \\
	&\phantom{=}\times[U_2^+\Delta{t}\delta_{m,j-1}+(h-\vert U_2\vert\Delta{t})\delta_{mj}
		+U_2^-\Delta{t}\delta_{m,j+1}].
\end{align*}
In this formula, $\delta_{rs}$ is the Kronecker delta:
$$	\delta_{rs}=
		\begin{cases}
			1 & \text{if\ } r=s \\
			0 & \text{if\ } r\ne s,
		\end{cases} $$
whereas
$$ U^+=\max{(U,\,0)}, \qquad U^-=\max{(-U,\,0)} $$
are the positive and negative part, respectively, of $U$.

\begin{figure}[t]
\centering
\includegraphics{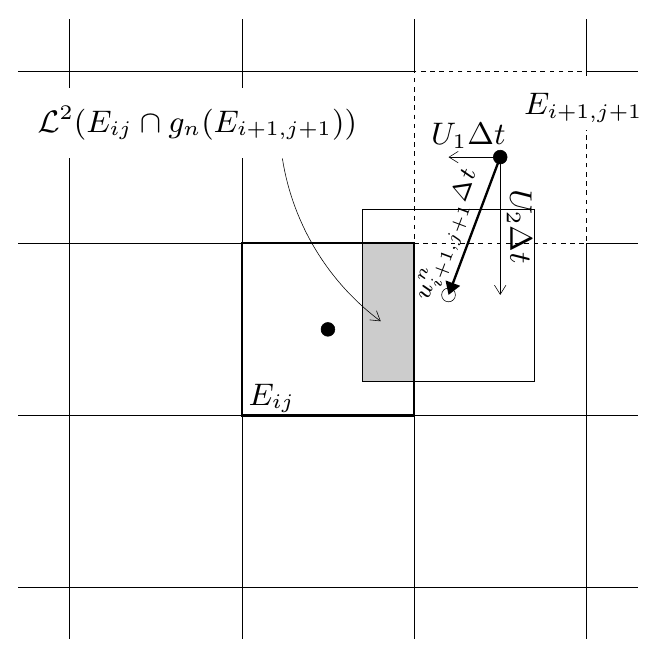}
\caption{The intersection between $E_{ij}$ and $g^n(E_{i+1,j+1})$, which is nonempty only if the horizontal and vertical components $U_1,\,U_2$ of the velocity $u^n_{i+1,j+1}$ are both negative. In this case, the Lebesgue measure (area) of the intersection is $\Lebesgue^2(E_{ij}\cap g^n(E_{i+1,j+1}))=U_1^-U_2^-\Delta{t}^2$.}
\label{fig:intersect}
\end{figure}

With respect to the discussion proposed in Subsect. \ref{subsect:nonlinflux}, let us compute, from the definition of $v_n$ given by Eq. \eqref{eq:vn}:
\begin{align*}
	v_n[\mu_n](x)-v_n[\lambda^n](\xi) &= \alpha(\xi-x)+\beta\lint_\Omega(y-x)\chi_{B_R(x)}(y)
		(\omega\circ\rho_n)(y)\,dy \\
	&\phantom{=}-\beta\lint_\Omega(y-\xi)\chi_{B_R(\xi)}(y)(\omega\circ P^n)(y)\,dy
\end{align*}
whence, adding and subtracting repeatedly several terms,
\begin{align*}
	\vert v_n[\mu_n](x)-v_n[\lambda^n](\xi)\vert &\leq \alpha\vert\xi-x\vert+
		\beta\left\vert\lint_\Omega(y-x)(\chi_{B_R(x)}(y)-\chi_{B_R(\xi)}(y))(\omega\circ\rho_n)(y)\,dy\right. \\
	&\phantom{\leq}+\lint_\Omega(y-x)\chi_{B_R(\xi)}(y)(\omega\circ\rho_n-\omega\circ P^n)(y)\,dy \\
	&\phantom{\leq}\left.+(\xi-x)\lint_\Omega\chi_{B_R(\xi)}(y)(\omega\circ P^n)(y)\,dy\right\vert \\
	&\leq \alpha\vert\xi-x\vert+2\beta C_\Omega[\omega(0)\Lebesgue^2(B_R(x)\triangle B_R(\xi))+L_\omega\|\rho_n-P^n\|_1] \\
	&\phantom{\leq}+\beta\omega(0)\Lebesgue^2(\Omega)\vert\xi-x\vert,
\intertext{where we have assumed that the weight function $\omega$ is Lipschitz continuous with Lipschitz constant $L_\omega>0$. Using Eq. \eqref{eq:L2Lip} we further obtain}
	&\leq [\alpha+8\beta C_\Omega\omega(0)+\beta\Lebesgue^2(\Omega)\omega(0)]\vert\xi-x\vert
		+2\beta C_\Omega L_\omega\|\rho_n-P^n\|_1 \\
	&=: C'\vert\xi-x\vert+C''\|\rho_n-P^n\|_1,
\end{align*}
which is precisely an estimate of the form \eqref{eq:Lip_nonlinflux}.

\section{Conclusions and research perspectives}	\label{sect:conclusions}
In this paper we have addressed the modeling of pedestrian flow problems from a macroscopic point of view, resorting to classical measure theoretical methods and specifically to the concept of discrete-time evolving measures.

The main idea consists in describing the space occupancy by pedestrians through a positive measure $\mu_n$, parameterized by a discrete time index $n\in\N$, which evolves according to the recursive push forward equation $\mu_{n+1}=\gamma_n\#\mu_n$. Here, $\gamma_n(x)=x+v_n(x)\Delta{t}$ is the so-called motion mapping, expressing the space displacement during a time step $\Delta{t}$ under the effect of a velocity field $v_n$. Given a walking area $\Omega\subseteq\R^2$, the number $\mu_n(E)$ represents, for every measurable $E\subseteq\Omega$, a measure, in macroscopic averaged terms, of the crowding of the area $E$ or, in other words, an estimate of the amount of people contained in $E$. Hence, the point of view on the system finally provided by the sequence of measures $\{\mu_n\}_{n>0}$ is essentially Eulerian, in spite of the Lagrangian description of the motion yielded by the mappings $\{\gamma_n\}_{n>0}$.

The first part of the work (cf. Sect. \ref{sect:theory}) has been devoted to a theoretical study, for generic $d$-dimensional systems, of the mathematical structures just outlined. In particular, we have first shown that the discrete-time push forward can be formally derived from an explicit time discretization of the classical mass conservation equation of continuum mechanics (cf. Proposition \ref{prop:mass-Lag-mu}). Then we have addressed specifically the question of the existence and uniqueness of a density $\rho_n$ for the measure $\mu_n$ with respect to the Lebesgue measure $\Lebesgue^d$ (cf. Theorems \ref{theo:abscont}, \ref{theo:wellpos}), so as to be able to speak of density of pedestrians at least in the sense of Radon-Nikodym theorem. Notice that if, on the one hand, measuring the space occupancy by $\mu_n$ makes perfectly sense in a macroscopic frame, claiming as a modeling assumption that pedestrians can be physically described by a continuously distributed density may be, on the other hand, more hardly acceptable and more difficult to justify. This issue is actually common to many other nonclassical systems, for which the concepts of mass and mass density are not as straightforward as in standard continuum mechanics. The theory we have developed in this paper provides, as a by-product, a possible methodology that, starting from reasonably meaningful macroscopic concepts, may help in gaining, at least in the abstract, a pointwise density for the system under consideration.

The existence of a density is useful for the numerical treatment of the equations of the model, indeed approximating $\rho_n$ as a mapping on $\R^d$ is undoubtedly more practical than approximating $\mu_n$ as a set mapping defined on some $\sigma$-algebra of $\R^d$. In the paper we have proposed a numerical scheme to discretize, at each time step, the push forward of the $\mu_n$'s over a suitable space grid in the domain $\Omega$. In addition, we have studied the error produced by the numerical solution both in a single step of push forward and on the true density $\rho_n$, on the basis of the level of refinement of the space grid compared to the discrete time step $\Delta{t}$. The scheme has proved to be robust and efficient in providing potentially accurate approximations of the true solution (cf. Theorems \ref{theo:onestepstab}, \ref{theo:multistepstab}), as well as in preserving some interesting theoretical features of the latter, e.g., nonnegativity, integrability, and local in time boundedness.

As mentioned above, the existence of a density is often also helpful in order to look conceptually at the system from a point of view close to that of classical continuum mechanics, even when, like in the case of pedestrian flow, the dynamics is mainly dictated by nonclassical evolution rules. In the second part of the work (cf. Sect. \ref{sect:modeling}) we have been concerned with the application of the time-evolving measures theory to the modeling of crowd motion. In particular, we have proposed a form of the velocity of pedestrians inspired by two main guidelines, somehow common to several microscopic models of pedestrian flow already available in the specialized literature: (i) The will of pedestrians to reach specific targets within the walking area, bypassing at the same time obstacles possibly present along their paths (desired velocity); (ii) The tendency of pedestrians, at least in normal, i.e. no panic, conditions, to look for uncrowded areas in their neighborhoods, in order not to stay too close to one another (interaction velocity). Notice that the simultaneous presence of Eulerian and Lagrangian aspects in the description of the system allows to duly incorporate microscopic viewpoints, namely the motion mapping $\gamma_n$ and consequently the construction of the velocity $v_n$, within a global macroscopic perspective on the system provided by the measures $\mu_n$.

\begin{figure}[t]
\centering
\includegraphics{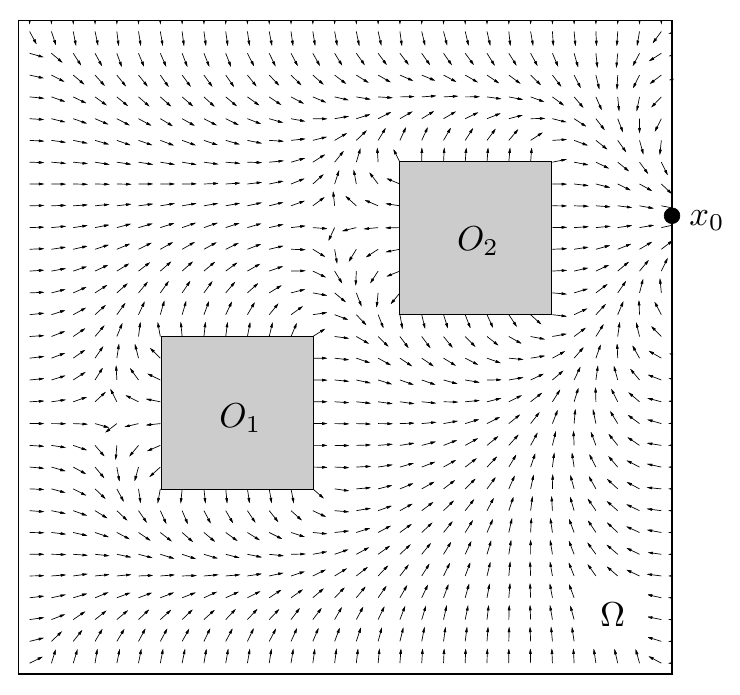}
\caption{Desired velocity in presence of two obstacles $O_1$, $O_2$. The field is generated by a potential attracting pedestrians toward the target $x_0$.}
\label{fig:desvel_pot}
\end{figure}

At present, our treatment of the desired velocity is mainly illustrative. In particular, we are assuming that pedestrians aim at one specific target, for instance an aggregation point or a door along the wall, that they can reach either directly or by possibly avoiding, in a simple manner, a single intermediate obstacle within the walking area (cf. Fig. \ref{fig:obstacle}). However, it is plain that the most interesting problems concern the flow of people in areas scattered with obstacles, like e.g., pillars, bottlenecks, narrow passages, therefore we are currently scheduling a forthcoming work in which we will further investigate this modeling aspect, relying on the essentially geometrical nature of the desired velocity field. As an anticipation, we observe that it is for instance possible to generate smooth fields from a potential $u:\Omega\to\R$, which attracts pedestrians toward a target $x_0\in\partial\Omega$, by solving the elliptic problem
$$	\left\{
	\begin{array}{rcll}
		-\Delta{u} & = & 0 & \text{in\ } \Omega \\
		u & = & g & \text{on\ } \partial\Omega,
	\end{array}
	\right.
$$
where $g$ vanishes on all external and internal boundaries of $\Omega$ but the one containing $x_0$, and then setting $v_d(x)=\nabla{u}(x)$ (see Fig. \ref{fig:desvel_pot}).

Concerning the interaction velocity, the main characteristic is that it accounts for nonlocal effects within a certain finite-radius neighborhood around pedestrians. In practice, each individual is assumed to have her/his walking program influenced, in average, by all people comprised in some neighborhood of interaction, in such a way that she/he can possibly agree to deviate locally from her/his preferred path in order to bypass crowded areas, looking preferentially for free space. In the present work we have shown that a nonlocal average structure of the interaction velocity fits well in the proposed time-evolving measures theoretical framework (cf. Proposition \ref{prop:mod-abscont}). From the modeling side, this proposal deserves further investigation and development in order to provide a simple but accurate description of the interactions among pedestrians, which will be in turn the object of the above-announced subsequent work.

\bibliographystyle{plain}
\bibliography{/dati/Ricerca/Bibliografie/Pedestrian/pedestrian-biblio,/dati/Ricerca/Bibliografie/Traffico/traffic-biblio}

\begin{thebibliography}{10}

\bibitem{MR1750085}
A.~Aw and M.~Rascle.
\newblock Resurrection of ``second order'' models of traffic flow.
\newblock {\em SIAM J. Appl. Math.}, 60(3):916--938, 2000.

\bibitem{MR2438218}
N.~Bellomo and C.~Dogb{\'e}.
\newblock On the modelling crowd dynamics from scaling to hyperbolic
  macroscopic models.
\newblock {\em Math. Models Methods Appl. Sci.}, 18(suppl.):1317--1345, 2008.

\bibitem{BuJiOu}
G.~Buttazzo, C.~Jimenez, and E.~Oudet.
\newblock An optimization problem for mass transportation with congested
  dynamics.
\newblock Preprint, 2007.

\bibitem{CaFaTi}
C.~Canuto, F.~Fagnani, and P.~Tilli.
\newblock A {E}ulerian approach to the analysis of rendez-vous algorithms.
\newblock Preprint, 2008.

\bibitem{MR2158218}
R.~M. Colombo and M.~D. Rosini.
\newblock Pedestrian flows and non-classical shocks.
\newblock {\em Math. Methods Appl. Sci.}, 28(13):1553--1567, 2005.

\bibitem{CoRo}
R.~M. Colombo and M.~D. Rosini.
\newblock Existence of nonclassical solutions in a pedestrian flow model.
\newblock Preprint, 2008.

\bibitem{HeJo}
D.~Helbing and A.~Johansson.
\newblock Qualitative agent-based modeling of human interactions in space and
  time.
\newblock Preprint, 2008.

\bibitem{HeJoAA}
D.~Helbing, A.~Johansson, and H.~Z. Al-Abideen.
\newblock Dynamics of crowd disasters: {A}n empirical study.
\newblock {\em Phys. Rev. E}, 75(4):046109, 2007.

\bibitem{PhysRevE.51.4282}
D.~Helbing and P.~Moln\'ar.
\newblock Social force model for pedestrian dynamics.
\newblock {\em Phys. Rev. E}, 51(5):4282--4286, May 1995.

\bibitem{HeMoFaBo}
D.~Helbing, P.~Moln\'ar, I.~J. Farkas, and K.~Bolay.
\newblock Self-organizing pedestrian movement.
\newblock {\em Environment and Planning B: Planning and Design},
  28(3):361--383, 2001.

\bibitem{HoBo}
S.~P. Hoogendoorn and P.~H.~L. Bovy.
\newblock State-of-the-art of vehicular traffic flow modelling.
\newblock {\em J. Syst. Cont. Eng.}, 215(4):283--303, 2001.

\bibitem{HoBo2}
S.~P. Hoogendoorn and P.~H.~L. Bovy.
\newblock Pedestrian travel behavior modeling.
\newblock {\em Networks and Spatial Economics}, 5:193--216, 2005.

\bibitem{Hu}
R.~L. Hughes.
\newblock A continuum theory for the flow of pedestrians.
\newblock {\em Transportation Research Part B}, 36:507--535, 2002.

\bibitem{MR0072606}
M.~J. Lighthill and G.~B. Whitham.
\newblock On kinematic waves. {II}. {A} theory of traffic flow on long crowded
  roads.
\newblock {\em Proc. Roy. Soc. London. Ser. A.}, 229:317--345, 1955.

\bibitem{MaVe}
B.~Maury and J.~Venel.
\newblock Un mod\`ele de mouvements de foule.
\newblock {\em {ESAIM}: {P}roceedings}, 18:143--152, 2007.

\bibitem{MR0075522}
P.~I. Richards.
\newblock Shock waves on the highway.
\newblock {\em Operations Res.}, 4:42--51, 1956.

\bibitem{MR2284944}
F.~Venuti, L.~Bruno, and N.~Bellomo.
\newblock Crowd dynamics on a moving platform: mathematical modelling and
  application to lively footbridges.
\newblock {\em Math. Comput. Modelling}, 45(3-4):252--269, 2007.

\end{thebibliography}
\end{document}